\documentclass[aps,
 final,%
 notitlepage,%
 %oneside,%
 %onecolumn,%
 twocolumn,
 %nobibnotes,%
 %nofootinbib,%
 %superscriptaddress,%
 %showpacs,%
 centertags,
 floatfix, pra]%
 {revtex4-1}
 \usepackage{showkeys}
 \usepackage{graphicx}
 \usepackage{mathtools}
 \usepackage[normalem]{ulem}
 \usepackage{cancel}
 \usepackage{hyperref}
 \usepackage{amsmath,amsthm,amssymb,amscd}
 \usepackage{tikz}
 \usepackage{graphicx,graphbox}
 \usepackage{ulem}
 \usepackage{dsfont}
 
 \usepackage{floatrow}

 \DeclareMathOperator{\Tr}{Tr}
 \usepackage{color}
 \newcommand{\rd}{{\mathrm d}}
 
 \newcommand{\C}{\mathbb C}
 \newcommand{\N}{{\mathbb N}}
 
 \newcommand{\bbone}{{\mathbb I}}

 \newcommand{\e}{{\mathrm e}}
 \newcommand{\Ccal}{{\mathcal C}}

 \newcommand{\rhoG}{\rho_{\textrm G}}

 \newcommand{\LNeg}{{\mathcal{E}_\mathcal{N}}}
 \newcommand{\Ntot}{{\mathcal N}_{\textrm{tot}}}
 \newcommand{\tG}{{\textrm G}}
 
 \newcommand{\vertiii}[1]{{\left\vert\kern-0.25ex\left\vert\kern-0.25ex\left\vert #1 
 		\right\vert\kern-0.25ex\right\vert\kern-0.25ex\right\vert}}

 %Anaelle's commands
 
 \newcommand{\tr}{\mathrm{Tr}}
 \newcommand{\Str}{\mathrm{Str}}

 \newcommand{\ket}[1]{\vert #1 \rangle}
 \newcommand{\bra} [1] {\langle #1 \vert}

 \newcommand{\mean}[1]{\langle #1 \rangle}
 
 \newcommand{\psiin}{{\psi_{\textrm{in}}}}
 \newcommand{\psiout}{{\psi_{\textrm{out}}}}

 \newtheoremstyle{break}
 {\topsep}{\topsep}%
 {\itshape}{}%
 {\bfseries}{}%
 {\newline}{}%
 \theoremstyle{break}
 \newtheorem{theorem}{Theorem}

 \newtheorem{corollary}{Corollary}
  \newtheorem{proposition}{Proposition}
 \newtheoremstyle{prime}
 {\topsep}{\topsep}%
 {\itshape}{}%
 {\bfseries}{'}%
 {\newline}{}%
 \theoremstyle{prime}
 \newtheorem{theorembis}{Theorem}

 \newcommand{\Grho}{\rho_{\mathrm{G}}}

 \newcommand{\QCS}{{\mathcal C}}	
 \newcommand{\QCStwo}{{\mathcal C}^2}
 
 \newcommand{\Ccl}{\Ccal_{\textrm{cl}}}
 \newcommand{\EoF}{\mathcal{E}_F}
 
 \newcommand{\Ftot}{{\mathcal F_{\textrm{tot}}}}
 \newcommand{\bR}{{\mathbf{R}}}
 
 \newcommand{\MTN}{{{\mathcal M}_{\textrm{TN}}}}

 \begin{document}

 	\title{
 		Relating the Entanglement and Optical Nonclassicality\\ of Multimode States of a Bosonic Quantum Field 
 	}

 	\author{Anaelle Hertz$^{1}$, Nicolas J. Cerf$\,^{2}$, Stephan De Bi\`evre$^{3}$}
 	
 	\address{
 		$^1$ Department of Physics, University of Toronto, Toronto, Ontario M5S 1A7, Canada\\
 		$^2$ Centre for Quantum Information and Communication, \'Ecole polytechnique de Bruxelles,
 		Universit\'e libre de Bruxelles, CP 165, 1050 Brussels, Belgium\\
 		$^3$Univ. Lille, CNRS, UMR 8524, INRIA - Laboratoire Paul Painlev\'e, F-59000 Lille, France
 	}
 	
 %	\date{\today}
 	
 	\begin{abstract}
The quantum nature of the state of a bosonic quantum field may manifest itself in its entanglement, coherence, or optical nonclassicality. Each of these distinct properties is known to be a resource for quantum computing or metrology, and can be evaluated via a variety of measures, witnesses, or monotones. Here, we provide quantitative and computable bounds relating, in particular, some entanglement measures with optical nonclassicality measures. Overall, these bounds capture the fact that strongly entangled states must necessarily be strongly optically nonclassical. As an application, we infer strong bounds on the entanglement that can be produced with an optically nonclassical state impinging on a beam splitter. Then, focusing on Gaussian states, we analyze the link between the logarithmic negativity and a specific nonclassicality measure called “quadrature coherence scale”.

 	\end{abstract}
 	\pacs{vvv}
 	
 	\maketitle
 	\section{Introduction}
 	%\emph{Introduction.} \---- 
 	There are several ways to question the specifically quantum mechanical character of the state of a physical system. 
 	First, one may ask how strongly coherent it is. The existence of coherent superpositions of quantum states is at the origin of interference phenomena in matter waves and, as such, is a typically quantum feature for which several measures and witnesses have been proposed (for a recent review, see \cite{streltsov17}). %(\textcolor{red}{see ? for a recent review}).  
 	Second, when the system under investigation is bi-partite or multi-partite, the entanglement of its components is another intrinsically quantum feature. There exists an extensive literature  exploring a wide variety of measures to quantify the amount of entanglement contained in a given state~\cite{Horodecki1996,Peres1996, Bennett1996, Duan, Simon,  WernerWolf, VidalWerner, Serafini2005, Shchukin, Rudolph, horo42009,  Walborn2009,  Zhang2013}. Finally, for modes of a bosonic quantum field, a third notion of nonclassicality arises, which is often refered to as optical nonclassicality. Following Glauber, the coherent states of an optical field (as well as their mixtures) are viewed as ``classical" as they admit a positive Glauber-Sudarshan P-function~\cite{tigl65}. From there, a variety of measures of optical nonclassicality have been developed over the years, measuring the departure from such optical classical states~\cite{tigl65, hi85, balu86, hi87, hi89,  le91, agta92, le95, luba95, domamawu00, mamasc02, rivo02, kezy04, ascari05, se06, zapabe07, vosp14, ryspagal15, spvo15, kistpl16, al17, na17,  ryspvo17, yabithnaguki18, kwtakovoje19, Bievre19, LuZh19}.

 	Each of these three distinct, typically quantum properties of the state of an optical field have been argued to serve as a resource in quantum information or metrology \cite{yabithnaguki18, kwtakovoje19, FriSkFuDu15, Sahota2015, Ge2018}. The question then naturally arises what the quantitative relations are between these properties. In~\cite{StrSi15}, for example, bounds are given on how much entanglement can be produced from states with a given amount of coherence using incoherent operations: this links coherence with entanglement.
 	In~\cite{HeDeB19}, the coherence and optical nonclassicality of a state are shown to be related to each other: a large value of far off-diagonal density matrix elements $\rho(x, x')$ or $\rho(p,p')$, called ``coherences'', is a witness of the optical nonclassicality of the state. 
 	Our purpose here is to establish a relation between optical nonclassicality and bi-partite entanglement for multi-mode bosonic fields. 
 	
 	One expects on intuitive grounds that a strongly entangled state should be strongly optically nonclassical since all optical classical states are separable.  Conversely, a state that is only weakly optically nonclassical cannot possibly be highly entangled. To make these statements precise and quantitative, we need both a measure of entanglement and one of optical nonclassicality. As a natural measure to evaluate bi-partite entanglement, we use the entanglement of formation ($\EoF$)~\cite{Bennett1996}. Regarding optical nonclassicality, we use a recently introduced monotone~\cite{yabithnaguki18, kwtakovoje19}, which we refer to as the monotone of total noise ($\MTN$). It is obtained by extending to mixed states [through a convex roof construction, see~\eqref{eq:W}] the so-called total noise $\Delta x^2 + \Delta p^2$ defined on pure states, for which it is a well established measure of optical nonclassicality~\cite{yabithnaguki18, kwtakovoje19, Bievre19, LuZh19}. Our first main result (Theorems \ref{thm1} \& \ref{thm:theorembis}')
 	%(see~\eqref{eq:EoFleqW} and~\eqref{eq:EoFleqW3}) provides 
 	consists in an upper bound on $\EoF(\rho)$ as a function of $\MTN(\rho)$ for an arbitrary state $\rho$ of a bi-partite system of $n=n_A+n_B$ modes. In particular, when $n_A=n_B=n/2$, this bound implies that states containing $m$ ebits of entanglement must have an optical nonclassicality -- measured via $\MTN$ -- that grows exponentially with $m$.  As an application, we show that the maximum entanglement that can be produced when a separable pure state impinges on a balanced beam splitter is bounded by the logarithm of the optical nonclassicality of this in-state, measured by $\MTN$. In other words, while it is well known that beam splitters can produce entanglement~\cite{Kim2002, Wang2002, ascari05}, the amount of entanglement  so produced is shown to be severely constrained by the degree of optical nonclassicality of the in-state. 
  More precisely, the amount of nonclassicality needed in the in-state grows exponentially with the number of e-bits of entanglement required at the output. Since it was shown in~\cite{HeDeB19} that strongly optically nonclassical states are extremely sensitive to environmental decoherence which destroys their nonclassicality on a very short timescale,  these results imply it is very difficult to produce strongly entangled states with a beam splitter in the above manner.
 	
 	The  bounds in Theorems \ref{thm1} \& \ref{thm:theorembis}' can readily be computed for pure states since $\EoF$ then coincides with the von Neumann entropy of the reduced state and $\MTN$ coincides with the total noise. For mixed states, however, the bounds relate two quantities that are generally hard to evaluate. Our second main result (Theorem~\ref{thm2}) addresses this issue by considering the special case of (mixed) Gaussian states.  
 	It establishes bounds between explicitly computable measures of entanglement (the logarithmic negativity --  $\mathcal{E}_\mathcal{N}$) and optical nonclassicality (the quadrature coherence scale -- $\QCS$) for Gaussian states. We also derive an explicit simple formula for the quadrature coherence scale of Gaussian states in terms of their covariance matrix [see Eq. \eqref{eq:QCS_Ftot}]. We show that it actually coincides with its Total Quantum Fisher Information ($\mathcal{F}_\text{tot}$), a quantity of importance in metrology which has been shown to provide a nonclassicality monotone~\cite{yabithnaguki18, kwtakovoje19}, albeit not a faithful one.
 	
 	\medskip

 	%%%%%%%%%%%%%%%%%%%%%%%%%
 	%%%%%%%%%%%%%%%%%%%%%%%%%%%
 	
 	\section{Bounding $\EoF$ by optical nonclassicality.}
 	We consider an $n$-mode optical field with annihilation mode operators $a_i=(X_i+ i P_i)/\sqrt{2}$ and corresponding quadratures $X_i$, $P_i$. We set 
 	$\bR=(X_1, P_1, \dots, X_n, P_n)$. The total noise of a pure state $|\psi\rangle$ is defined as
 	$\Ntot(\psi)=\sum_j \Delta R_j^2$ where $\Delta R_j^2$ denotes the variance of $R_j$~\cite{sc86}.
 	%For a general state $\rho$, we consider the convex roof $\MTN$ of $\Ntot$, defined as
 	For a general state $\rho$, the convex roof $\MTN$ of $\Ntot$ is
 	\begin{equation}\label{eq:W}
 	\MTN(\rho)=\frac1n\inf_{\{p_i,\psi_i\}} \sum_i p_i \, \Ntot(\psi_i)\geq 1,
 	\end{equation}
 	where the infimum is over all ensembles $\{p_i,\psi_i\}$ for which $\rho=\sum_i p_i|\psi_i\rangle\langle \psi_i|$, $\sum_i p_i=1$. 
 	It is shown in~\cite{yabithnaguki18, kwtakovoje19} that $\MTN$ belongs to a family of optical nonclassicality monotones and is, as such, a faithful witness of optical nonclassicality: $\MTN(\rho)> 1$ iff $\rho$ is nonclassical. 
 	
 	Now consider a bi-partition of the $n$ modes in two sets of $n_A$ and $n_B$ modes, with $n=n_A+n_B$. We write $\rho_A$ (respectively $\rho_B$) for the reduction of the state $\rho$ to the $n_A$ ($n_B$) modes. If $\rho=|\psi\rangle\langle \psi|$, its entanglement of formation is defined as $\EoF(\psi)=-\Tr\rho_A\ln\rho_A=-\Tr\rho_B\ln\rho_B$. Then, for a general $\rho$, taking the infimum as above~\cite{Bennett1996},
 	$$
 	\mathcal \EoF(\rho)=\inf_{\{p_i,\psi_i\}} \sum_i p_i \, \EoF(\psi_i).
 	$$
 	%where the infimum is taken as above~\cite{Bennett1996}. 
 	We first consider the symmetric case $n_A=n_B=n/2$:
 	\begin{theorem} \label{thm1}Let $\rho$ be a bipartite state with $n_A=n_B=n/2$, then
 		\upshape
 		\begin{equation}\label{eq:EoFleqW}
 		\EoF(\rho)\leq \frac{n}2 \, g\left(\frac{1}2\left(\MTN(\rho)-1\right)\right),
 		\end{equation}
 		\itshape
 		where $g(x)=(x+1)\ln(x+1)-x\ln x$. 
 	\end{theorem}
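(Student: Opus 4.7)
The plan is to first prove the inequality for pure states, where both sides reduce to computable quantities, and then lift it to mixed states using the convex-roof definitions of $\EoF$ and $\MTN$. Two analytic inputs will be used repeatedly: $g$ is increasing and concave on $(0,\infty)$ (since $g''(x)=-1/[x(x+1)]<0$), and the thermal state is the maximum-entropy state at fixed mean photon number.

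For the pure-state case $\rho=|\psi\rangle\langle\psi|$, I first exploit the fact that $\EoF$ and $\MTN$ are both invariant under phase-space displacements---the former because displacements act as local unitaries on each subsystem, the latter because $\Ntot$ only involves variances. I may therefore assume $\langle R_j\rangle_\psi=0$ for all $j$, and in that case
$$\Ntot(\psi)=\sum_{i=1}^n\bigl(\langle X_i^2\rangle+\langle P_i^2\rangle\bigr)=2\mathcal{N}+n,$$
where $\mathcal{N}=\sum_{i=1}^n\langle a_i^\dagger a_i\rangle=N_A+N_B$ is the total mean photon number split between the two subsystems. Equivalently, $(N_A+N_B)/n=\tfrac12(\MTN(\psi)-1)$. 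I then invoke the standard maximum-entropy bound $S(\rho_A)\leq n_A\,g(N_A/n_A)$, obtained by noting that a product of identical thermal modes with occupation $N_A/n_A$ maximizes entropy among $n_A$-mode states with fixed total mean photon number, and similarly for $B$. Since $\EoF(\psi)=S(\rho_A)=S(\rho_B)$, averaging the two bounds and applying Jensen to the concave $g$ yields, with $n_A=n_B=n/2$,
$$\EoF(\psi)\leq\frac{1}{2}\Bigl[\frac{n}{2}g(2N_A/n)+\frac{n}{2}g(2N_B/n)\Bigr]\leq\frac{n}{2}\,g\Bigl(\frac{N_A+N_B}{n}\Bigr)=\frac{n}{2}\,g\Bigl(\tfrac12(\MTN(\psi)-1)\Bigr).$$

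To pass to a general mixed $\rho$, I pick a decomposition $\{p_i,\psi_i\}$ achieving the infimum in~\eqref{eq:W}. The convex-roof definition of $\EoF$ and the pure-state bound just established give
$$\EoF(\rho)\leq\sum_i p_i\,\EoF(\psi_i)\leq\frac{n}{2}\sum_i p_i\,g\Bigl(\tfrac12(\MTN(\psi_i)-1)\Bigr),$$
and a second Jensen step, again using concavity of $g$, combined with the identity $\sum_i p_i\MTN(\psi_i)=\MTN(\rho)$ coming from optimality of the chosen decomposition, closes the argument. The main technical hurdle is establishing the single-subsystem bound $S(\rho_A)\leq n_A\,g(N_A/n_A)$ cleanly: it rests on the thermal-state maximum-entropy principle, followed by concavity of $g$ applied to the per-mode occupations constrained by $\sum_i\bar n_i=N_A$. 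Everything else is a careful bookkeeping of the two Jensen-type steps---one to symmetrize between the $A$ and $B$ reductions, and one to move from the pure-state decomposition to $\rho$ itself.
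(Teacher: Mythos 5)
Your argument is correct in substance and uses the same essential ingredients as the paper (phase-space translation invariance, the thermal maximum-entropy bound $S(\rho_A)\leq n_A\,g(N_A/n_A)$, concavity of $g$, and the convex-roof structure), but it differs tactically in two places. For the pure-state step, instead of the paper's route --- bounding $\EoF(\psi)$ by the \emph{minimum} of the two subsystem bounds and then maximizing over the split $N_A$, which pins down the optimizer $N_A^*=N/2$ in the symmetric case --- you write $\EoF(\psi)=\tfrac12\bigl(S(\rho_A)+S(\rho_B)\bigr)$ and apply Jensen once; this is slicker for $n_A=n_B$, and in fact the weighted version even yields the same $\tfrac n2\,g(N/n)$ bound for $n_A\neq n_B$, though it would not reproduce the sharper asymmetric bound of the paper's Theorem 1$'$, which is what the min-plus-maximization machinery is built for. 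For the mixed-state step you assume the infimum in the definition of $\MTN$ is \emph{attained} and use $\sum_i p_i\,\MTN(\psi_i)=\MTN(\rho)$ for that optimal decomposition; attainment of a convex-roof infimum over infinitely many decompositions is not automatic and you do not justify it. This is easily repaired: take a decomposition with $\sum_i p_i\,\Ntot(\psi_i)/n\leq\MTN(\rho)+\epsilon$, use monotonicity and continuity of $g$, and let $\epsilon\to0$ --- which is in effect what the paper does by running the estimate for an arbitrary decomposition and taking the infimum only at the end. With that small fix your proof is complete.
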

 	\begin{proof}
 		We first consider pure states $\rho=|\psi\rangle\langle\psi|$. Since both sides of~\eqref{eq:EoFleqW} are invariant under phase space translations, we may assume that  $\langle\psi|R_j|\psi\rangle=0$, $\forall j$. In that case, 
 		$\MTN(\psi)=(2\langle\psi|\hat N|\psi\rangle + n)/n=2N/n+1$, where $N=\langle\psi|\hat N|\psi\rangle$ is the expectation value of the total photon number operator $\hat N=\sum_j a^\dagger_ja_j$ in the centered state $\ket{\psi}$. Similarly, defining $\hat N_A=~\sum_{j=1}^{n_A}a^\dagger_ja_j$ and $\hat N_B=\sum_{j=n_A+1}^{n}a^\dagger_ja_j$, one has $N_A=\Tr\hat N_A\rho_A$, $N_B=\Tr\hat N_B\rho_B$, and $N=N_A+N_B$. Then
 		\begin{eqnarray*}
 			\EoF(\psi)&=&-\Tr \rho_A\ln\rho_A=-\Tr\rho_B\ln\rho_B\\
 			&\leq& \min\left\{n_A \, g\left(\frac{N_A}{n_A}\right) , n_B \, g\left(\frac{N_B}{n_B}\right)\right\},
 		\end{eqnarray*}
 		where $n_A\, g(N_A/n_A)$ is the von Neumann entropy of 
 		%the thermal equilibrium state of $n_A$ modes  with photon number $N_A$, which maximizes the von Neumann entropy at fixed photon number.
 		the product of $n_A$ single-mode thermal states with mean photon number $N_A/n_A$ per mode, which maximizes the von Neumann entropy at fixed mean photon number $N_A$~\cite{We1978}. Maximizing over all states $|\psi\rangle$ with a fixed mean photon number $N$ then yields
 		$$
 		\EoF(\psi)\leq\max_{0\leq N_A\leq N} \min\left\{n_A \, g\left(\frac{N_A}{n_A}\right),n_B \, g\left(\frac{N-N_A}{n_B}\right)\right\}.
 		$$
 		Since $g$ is an increasing function, the maximum is, for each $N$, reached at a unique value $N_A^*$ that depends on $N$ and is the solution of
 		\begin{equation}\label{eq:NAstarbis}
 		n_A \, g\left(\frac{N_A^*}{n_A}\right)=n_B \, g\left(\frac{N-N_A^*}{n_B}\right) .
 		\end{equation}
 		Hence
 		\begin{equation}\label{eq:EoFNAstar}
 		\EoF(\psi)\leq n_A \, g(N_A^*/n_A) := F(N).
 		\end{equation}
 		%  =n_B \, g((N-N_A^*)/n_B)
 		When $n_A=n_B$, then $N_A^*=N_B^*=N/2$, so that 
 		\begin{equation}\label{eq:EoFMTNpure}
 		\EoF(\psi)\leq \frac{n}2 \, g\left(\frac{N}{n}\right) = \frac{n}2  \, g\left(\frac{1}2(\MTN(\psi)-1)\right).
 		\end{equation}
 		This implies Eq.~\eqref{eq:EoFleqW} for any pure state $\rho=|\psi\rangle\langle\psi|$.
 		Now let $\rho$ be an arbitrary state and consider any set of normalized $\ket{\psi_i}$ and $0\leq p_i\leq 1$ such that $\sum_i p_i|\psi_i\rangle\langle\psi_i|=\rho$. 
 		Then Eq.~\eqref{eq:EoFMTNpure} and the concavity of $g$ imply that
 		\begin{eqnarray*}
 			\sum_i p_i \, \EoF({\psi_i})&\leq&\frac{n}2 \sum_i p_i \, g\left(\frac{1}2(\MTN(\psi_i)-1)\right)\\
 			&\leq&\frac{n}2  \, g\left(\frac{1}2 \left(\sum_i p_i \, \MTN(\psi_i)-1\right) \right).
 		\end{eqnarray*}
 		Since $g(x)$ is monotonically increasing, taking the infimum over $\{p_i,\psi_i\}$ on both sides implies Eq.~\eqref{eq:EoFleqW}.
 	\end{proof}
 	One readily sees that, among all states $|\psi\rangle$ with a given $N$, the upper bound $\EoF(\psi)=\frac{n}{2}\, g\left(\frac{N}{n}\right)$ is reached 
 	for an $n/2$-fold tensor product of two-mode squeezed vacuum states with $N/n$ photons per mode,  which is a Gaussian pure state. This is not the unique optimal pure state. We identify all such states in Appendix \ref{s:maximumstate} and show they are not all Gaussian.

 	Since $g(x)$ is an increasing function, the bound~\eqref{eq:EoFleqW}  straightforwardly implies that states with a large entanglement of formations are necessarily strongly nonclassical:  %More precisely we prove the following in Appendix \ref{s:proofcor}:
 	\begin{corollary}
 		\upshape
 		\begin{equation}\label{eq:WEoFexp}
 		\EoF(\rho)\geq \frac{3}{2}\frac{n}{2}\ \ \Rightarrow\ \  \MTN(\rho)\geq 1+ 2 \, \e^{\frac2n\EoF(\rho)-2}.
 		\end{equation}
 		\itshape
 	\end{corollary}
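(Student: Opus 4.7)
The plan is to invert the estimate of Theorem~\ref{thm1} by means of a clean upper bound on the function $g$. Introducing $y = \tfrac12(\MTN(\rho)-1)\geq 0$, the theorem reads $\EoF(\rho) \leq \tfrac{n}{2}\,g(y)$, i.e.\ $g(y)\geq \tfrac{2}{n}\EoF(\rho)$. Since $g$ is increasing but has no elementary inverse, I will seek a simple upper bound $g(y)\leq h(y)$ with $h$ invertible, and chain inequalities.

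First, I would rewrite
$$g(y) = (y+1)\ln(y+1) - y\ln y = \ln(y+1) + y\ln\!\left(1+\tfrac{1}{y}\right),$$
and apply the elementary inequality $\ln(1+u)\leq u$ to each piece: $y\ln(1+1/y)\leq 1$ and $\ln(y+1) = \ln y + \ln(1+1/y) \leq \ln y + 1/y$. This yields the crude but invertible estimate
$$g(y) \leq \ln y + 1 + \tfrac{1}{y}.$$

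Second, I would invoke the hypothesis $\EoF(\rho)\geq \tfrac{3n}{4}$ in order to push $y$ into a regime where the $1/y$ term becomes harmless. Indeed, the hypothesis forces $g(y)\geq 3/2$; since $g$ is increasing and $g(1) = 2\ln 2 < 3/2$, we must have $y>1$, and hence $1/y<1$, so the bound simplifies to $g(y)\leq \ln y + 2$. Chaining $\tfrac{2}{n}\EoF(\rho)\leq g(y)\leq \ln y + 2$ and solving for $y$ yields $y\geq \exp\!\left(\tfrac{2}{n}\EoF(\rho)-2\right)$, which, upon rewriting $y$ in terms of $\MTN(\rho)$, is precisely $\MTN(\rho)\geq 1 + 2\exp\!\left(\tfrac{2}{n}\EoF(\rho)-2\right)$.

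The only step requiring a bit of attention is recognizing the role played by the hypothesis: the raw bound $g(y)\leq \ln y + 1 + 1/y$ cannot be inverted in closed form, and the threshold $\EoF(\rho)\geq \tfrac{3n}{4}$ is what confines $y$ to the range $y>1$ where the $1/y$ correction can be absorbed into a clean additive constant. Beyond this elementary observation, the derivation is a routine manipulation of the function $g$, so I do not expect any genuine obstacle.
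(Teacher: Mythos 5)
Your proof is correct and follows essentially the same route as the paper: the key estimate $g(y)\leq \ln y + 1 + \tfrac1y$, the use of the hypothesis to force $y=\tfrac12(\MTN(\rho)-1)\geq 1$ so that the bound simplifies to $g(y)\leq \ln y + 2$, and the inversion giving $\MTN(\rho)\geq 1+2\,\e^{\frac2n\EoF(\rho)-2}$ are all identical to the paper's argument. The only immaterial difference is how the threshold is obtained: you use monotonicity of $g$ together with $g(1)=2\ln 2<\tfrac32$, while the paper instead invokes the bound $g(x)\leq x+\tfrac12$.
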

 	\begin{proof}
 		Suppose $\EoF(\rho)\geq \frac32\frac{n}{2}$. Using Eq.~\eqref{eq:EoFleqW}, it implies that  
 		$g\left(\frac{1}2\left(\MTN(\rho)-1\right)\right) \ge \frac32$. Noting that $g(x) \leq x+\frac12$ for all $x$, one can conclude that 
 		%   \leq x+1-\ln(\e-1)
 		% $$
 		% \EoF(\psi)\leq  \frac12\QCStwo(\psi).
 		% $$
 		$\frac{1}{2}(\MTN(\rho)-1) \geq 1$. Now, one also has $g(x)\leq \ln x +1+\frac1x$ for all $x>0$, and hence $g(x)\leq \ln x + 2$ for all $x\geq 1$. 
 		Hence  $g\left(\frac{1}2\left(\MTN(\rho)-1\right)\right) \leq \ln\left(\frac{1}2(\MTN(\rho)-1)\right) + 2 $. Using again Eq.~\eqref{eq:EoFleqW} implies that
 		$
 		\EoF(\rho)\leq\frac{n}2\left( \ln\left(\frac{1}2(\MTN(\rho)-1)\right) + 2\right),
 		$
 		from which one concludes $ \MTN(\rho)\geq1+ 2\e^{\frac2n\EoF(\rho)-2}$,
 		which is Eq.~\eqref{eq:WEoFexp}. 
 	\end{proof}
 	
 	In other words, if we view both entanglement and optical nonclassicality as resources, this inequality shows that the amount of optical nonclassicality of a state $\rho$, as measured by $\MTN(\rho)$, grows exponentially fast with its entanglement of formation, measured in number of ebits.    
 	Conversely, the bound~\eqref{eq:EoFleqW} shows that states with a low optical nonclassicality are necessarily weakly entangled.
 	
 	When $n_A \leq n_B$, Theorem \ref{thm1} can be generalized as follows 
 	%(the proof is given in \cite{supplemental}).
 	\begin{theorembis}\label{thm:theorembis} 
 		Let $\rho$ be a bipartite state with $n_A\leq n_B$, then
 		\upshape
 		\begin{equation}\label{eq:EoFleqW3}
 		\EoF(\rho)\leq n_Ag\left(\frac1{n_A}N_A^*\left(\frac{n}{2}(\MTN(\rho)-1)\right)\right),
 		\end{equation}
 		\itshape
 		where $N_A^*(N)$ is the unique solution of  Eq.~\eqref{eq:NAstarbis}.
 		%\begin{equation}
 		%n_Ag(N_A^*/n_A)=n_Bg((N-N_A^*)/n_B).
 		%\end{equation}
 	\end{theorembis}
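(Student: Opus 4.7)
The strategy is to follow the same two-step pattern as in the proof of Theorem~\ref{thm1}: first establish the bound for pure states with total photon number $N$, then extend to mixed states through the convex-roof definitions of $\MTN$ and $\EoF$. The first step is essentially free, because the proof of Theorem~\ref{thm1} already established Eq.~\eqref{eq:EoFNAstar}, namely $\EoF(\psi)\le n_A\,g(N_A^*(N)/n_A):=F(N)$, for any centered pure state $\ket\psi$ with $N=\langle\psi|\hat N|\psi\rangle=\tfrac{n}{2}(\MTN(\psi)-1)$, without ever assuming $n_A=n_B$. This yields Eq.~\eqref{eq:EoFleqW3} for pure states directly.

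To promote the bound to mixed $\rho$, I pick any ensemble $\{p_i,\psi_i\}$ realizing $\rho$, with each $\psi_i$ centered (possible by translation invariance), denote $N_i=\tfrac{n}{2}(\MTN(\psi_i)-1)$, and write
\begin{equation*}
\sum_i p_i\,\EoF(\psi_i)\le \sum_i p_i\,F(N_i)\le F\!\left(\sum_i p_i N_i\right),
\end{equation*}
where the second inequality uses the (still to be established) concavity of $F$. Since $F$ is increasing, taking the infimum over ensembles on both sides yields $\EoF(\rho)\le F\!\bigl(\tfrac{n}{2}(\MTN(\rho)-1)\bigr)$, which is Eq.~\eqref{eq:EoFleqW3}.

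The real content is therefore the concavity (and monotonicity) of $F$ on $[0,\infty)$. I would argue it by inverting. Introduce $h_A(t)=n_A\,g(t/n_A)$ and $h_B(s)=n_B\,g(s/n_B)$; both are smooth, strictly increasing, and strictly concave (because $g''(x)=-1/(x^2+x)<0$). The defining equation~\eqref{eq:NAstarbis} states $h_A(N_A^*)=h_B(N-N_A^*)=F$, so $N=h_A^{-1}(F)+h_B^{-1}(F)$. Each inverse $h_A^{-1},h_B^{-1}$ is strictly increasing and convex, so their sum $N(F)$ is strictly increasing and convex; hence the inverse map $F(N)$ is strictly increasing and concave, as required. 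This is the step I expect to be the main obstacle — not because the argument is long, but because one has to identify the right reformulation (inversion rather than implicit differentiation) to avoid a messy computation of $F''(N)$ via the implicit-function theorem.

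Once concavity of $F$ is in hand, the passage to mixed states goes through exactly as in Theorem~\ref{thm1}, with Jensen's inequality applied to $F$ replacing the direct use of Jensen's inequality for $g$ in the symmetric case (where $F(N)=\tfrac{n}{2}g(N/n)$ trivially). One sanity check: setting $n_A=n_B=n/2$ forces $N_A^*=N/2$ by symmetry of Eq.~\eqref{eq:NAstarbis}, reducing Eq.~\eqref{eq:EoFleqW3} to Eq.~\eqref{eq:EoFleqW}, so Theorem~\ref{thm:theorembis}' indeed generalizes Theorem~\ref{thm1}.
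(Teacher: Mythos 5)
Your proposal is correct, and its skeleton coincides with the paper's proof: the pure-state bound is exactly Eq.~\eqref{eq:EoFNAstar}, which (as you note) was derived in the proof of Theorem~\ref{thm1} without using $n_A=n_B$, and the passage to mixed states is the same convex-roof argument relying on monotonicity and concavity of $F$. The only point of divergence is how concavity of $F$ is established. The paper differentiates the implicit relation \eqref{eq:NAstarbis}, obtains $F'(N)=\bigl(g'(N_A^*/n_A)^{-1}+g'(N_B^*/n_B)^{-1}\bigr)^{-1}$, and argues that $F'$ decreases because $g'$ is decreasing while $N_A^*$ and $N_B^*$ increase with $N$; you instead invert, writing $N(F)=h_A^{-1}(F)+h_B^{-1}(F)$ with $h_A(t)=n_A\,g(t/n_A)$, $h_B(s)=n_B\,g(s/n_B)$, and use the standard facts that the inverse of an increasing concave function is increasing and convex and that a sum of such inverses is again increasing and convex, so that $F=N^{-1}$ is increasing and concave. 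Your route avoids the implicit differentiation entirely and even gives strict monotonicity and strict concavity for free; the paper's route has the side benefit of an explicit formula for $F'(N)$, but buys nothing essential beyond that. One small phrasing issue: you cannot assume all the $\psi_i$ of an ensemble for $\rho$ are simultaneously centered (centering each one would change $\rho$); what your chain of inequalities actually uses, and all that is needed, is the per-state bound $\EoF(\psi_i)\leq F\bigl(\tfrac n2(\MTN(\psi_i)-1)\bigr)$, valid for every pure state by translation invariance of both $\EoF$ and $\MTN$ --- which is precisely how the paper phrases it.
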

 	
 	\begin{proof}
 		We first need to show that the function 
 		$$
 		F(N) := n_A  \, g\left(\frac{N_A^*}{n_A}\right)
 		$$ is concave, where $N_A^*$ is a function of $N$, implicitly defined as the solution of \eqref{eq:NAstarbis},
 		$
 		n_A \, g\left(\frac{N_A^*}{n_A}\right)=n_B \, g\left(\frac{N_B^*}{n_B}\right),
 		$
 		where we defined $N_B^*=N-N_A^*$. Taking the derivative with respect to $N$ in both sides of the last two equalities, one finds 
 		\begin{eqnarray*}
 			g'\left(\frac{N_A^*}{n_A}\right)\frac{\rd N_A^*}{\rd N}&=&g'\left(\frac{N_B^*}{n_B}\right)\frac{\rd N_B^*}{\rd N},\nonumber\\
 			\frac{\rd N_A^*}{\rd N}+\frac{\rd N_B^*}{\rd N}&=&1.
 		\end{eqnarray*}
 		Since $g'>0$, it follows from these two equations that both $\frac{\rd N_A^*}{\rd N}$ and $\frac{\rd N_B^*}{\rd N}$ are positive, so that both $N_A^*$ and $N_B^*$ are increasing functions of $N$. 
 		One readily finds that
 		\begin{eqnarray*}
 			\frac{\rd N_A^*}{\rd N}&=& \frac{g'\left(\frac{N_B^*}{n_B}\right)}{g'\left(\frac{N_A^*}{n_A}\right)+g'\left(\frac{N_B^*}{n_B}\right)},
 			\nonumber\\
 			\frac{\rd N_B^*}{\rd N}&=& \frac{g'\left(\frac{N_A^*}{n_A}\right)}{g'\left(\frac{N_A^*}{n_A}\right)+g'\left(\frac{N_B^*}{n_B}\right)}
 		\end{eqnarray*}
 		and consequently that
 		$$
 		F'(N)= \frac{g'\left(\frac{N_A^*}{n_A}\right)g'\left(\frac{N_B^*}{n_B}\right)}{g'\left(\frac{N_A^*}{n_A}\right)+g'\left(\frac{N_B^*}{n_B}\right)}= \frac{1}{g'\left(\frac{N_A^*}{n_A}\right)^{-1}+g'\left(\frac{N_B^*}{n_B}\right)^{-1}}.
 		$$
 		Now, since $g$ is concave, it follows that $g'$ is a decreasing function of its argument. Since $N_A^*$ is an increasing function of $N$, it then follows that $g'\left(\frac{N_A^*}{n_A}\right)$ is a decreasing function of $N$, and similarly for $g'\left(\frac{N_B^*}{n_B}\right)$. Hence, $F'$ is a decreasing function of $N$, implying that $F$ is concave.

 		We now use this fact to conclude the proof of Theorem~\ref{thm:theorembis}'. We initially follow the same lines as in the proof of Theorem~\ref{thm1}. For  a centered pure state $\psi$, Eq.~\eqref{eq:EoFNAstar} reads
 		$
 		\EoF(\psi)\leq F(N)=F\left(\frac{n}2(\MTN(\psi)-1)\right).
 		$
 		Since both $\EoF$ and $\MTN$  are invariant under phase space translations, one then has, for all $\psi$,
 		$
 		\EoF(\psi)\leq F\left(\frac{n}2(\MTN(\psi)-1)\right).
 		$
 		The concavity of the funtion $F$ further implies that 
 		% as in the proof of Theorem~1,
 		\begin{eqnarray*}
 			\sum_i p_i \, \EoF(\psi_i) &\leq & \sum_i p_i \, F\left(\frac{n}2(\MTN(\psi_i)-1)\right) \nonumber\\
 			&  \leq & F \left( \sum_i p_i \, \frac{n}2\left(\MTN(\psi_i)-1\right) \right).  
 			%&\leq& n_A \sum p_i \, g\left(\frac1{n_A}N_A^*(N_i)\right)\\
 			%&\leq&n_Ag\left(\frac1{n_A}\sum p_i N_A^*(N_i)\right)\\
 			%&= & n_A \, g\left(\frac1{n_A}N_A^*\left(\frac{n}2\sum_i p_i (\MTN(\psi_i)-1)\right)\right)\\
 		\end{eqnarray*}
 		Taking the infimum on both sides and using the fact that $F$ is a monotonically increasing function of its argument, one finds
 		$
 		\EoF(\rho) \leq  F \left(\frac{n}2(\MTN(\rho)-1) \right).   
 		$
 		Recalling the definition of $F$, %in~\eqref{eq:EoFNAstar}
 		 one sees this  is Eq.~\eqref{eq:EoFleqW3}.

 	\end{proof}
 	
 	An analytic expression for $N_A^*(N)$ is not available  when $n_A\not= n_B$, so that \eqref{eq:EoFleqW3} is less explicit than \eqref{eq:EoFleqW}. Nevertheless, for large $N$, one  readily finds the following approximate expression for $N_A^*$ (see Appendix \ref{s:proofthmbis}):
 	\begin{equation}\label{eq:NAstarapprox}
 	N_A^*(N)\simeq(1-\delta)\, N,\ \ \textrm{with}\ \ \delta=\frac{(\e\nu)^{\mu-1}}{\mu(1+(\e\nu)^{\mu-1})},
 	\end{equation}
 	where $\mu=n_A/n_B$ and $\nu=N/n_A$. Consequently, using $g(x)\simeq \ln(x)+1$ for large $x$, one finds
 	approximately that 
 	% \begin{equation}\label{eq:EoFvsMTNasymptold}
 	% \EoF(\rho)\leq n_A\ln\left(\frac{n}{2 n_A}(\MTN(\rho)-1)\right)+n_A,
 	% \end{equation}
 	\begin{equation}\label{eq:EoFvsMTNasympt}
 	\EoF(\rho)\leq n_A\ln\left(\frac{(1-\delta)}{n_A}\frac{n}{2}(\MTN(\rho)-1)\right)+n_A,
 	\end{equation}
 	which is valid for large $\MTN(\rho)$ and shows a similar logarithmic upper bound on $\EoF$ in terms of $\MTN$ as above. This large-$N$ approximation is illustrated in the left panel of Fig.~\ref{fig:BS}.
 	
 	For Gaussian pure states, a simpler and more explicit upper bound can be obtained, which is valid for all values of $N$: 
 	\begin{proposition} 
 			 		 Let $n_A\leq n_B$ and 
 			let $\psi^\tG$ be a pure Gaussian state. Then
 			 		\upshape
 			\begin{eqnarray}\label{eq:EoFvsMTNGauss}
 			\EoF(\psi^\tG)
 			&\leq&n_A  g\left(\frac{n}{4n_A}\left(\MTN(\psi^\tG)-1\right)\right).
 			\end{eqnarray}
 			 		\itshape
 	\end{proposition}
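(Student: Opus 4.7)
The plan is to reduce the bound to a statement about the symplectic spectrum of the $A$-reduction. Since $\psi^\tG$ is pure, $\EoF(\psi^\tG)=S(\rho^\tG_A)$, where $\rho^\tG_A$ is the Gaussian reduction of $\psi^\tG$ to the $A$ modes. Denoting by $\nu_1,\dots,\nu_{n_A}$ its symplectic eigenvalues and setting $n_j=\nu_j-1/2\geq 0$, one has $S(\rho^\tG_A)=\sum_{j=1}^{n_A}g(n_j)$, so concavity of $g$ and Jensen's inequality give
\[
\EoF(\psi^\tG)\leq n_A\, g\!\left(\frac{1}{n_A}\sum_{j=1}^{n_A}n_j\right).
\]
All that remains is to show $\sum_{j=1}^{n_A}n_j\leq N/2$, where $N$ is the mean photon number of $\psi^\tG$; this may be assumed centered since both sides of the target inequality are invariant under phase-space translations.

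I would then invoke the Schmidt-like pairing of the symplectic spectra across a pure bipartite Gaussian state: with $n_A\leq n_B$, the symplectic spectrum of $\rho^\tG_B$ consists of the same $\nu_1,\dots,\nu_{n_A}$ plus $n_B-n_A$ additional vacuum values $1/2$. In particular, the ``thermal photon sums'' of the two reductions agree and both equal $\sum_{j=1}^{n_A}n_j$.

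The central analytic input is the well-known minimum-energy property of Gaussian states: for any centered Gaussian state with symplectic eigenvalues $\{\nu_j\}$, the mean photon number satisfies $N\geq\sum_j(\nu_j-1/2)$, with equality iff the state is a product of thermal states in its Williamson basis. Combining Williamson's theorem $\gamma=S^\top D S$ and the Bloch--Messiah decomposition $S=O_1 Z O_2$, this reduces to $\Tr(Z^2\, O_1^\top D\, O_1)\geq\Tr D$, which follows from the observation that the diagonal $2\times 2$ blocks of $O_1^\top D\, O_1$ are scalar multiples of $I_2$ (a direct consequence of $O_1$, being orthogonal symplectic, intertwining the complex structure on the modes while $D=\bigoplus_k\nu_k I_2$ is block-scalar), so that only the squeezing factor $Z$ contributes $\sum_j\alpha_j(z_j^2+z_j^{-2}-2)\geq 0$ in excess of $\Tr D$. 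Applied to each reduction, this yields $N_A\geq\sum_{j=1}^{n_A}n_j$ and $N_B\geq\sum_{j=1}^{n_A}n_j$, and hence $\sum_{j=1}^{n_A}n_j\leq(N_A+N_B)/2=N/2$.

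Substituting into the Jensen bound above, and recalling from the proof of Theorem~\ref{thm1} that $N=(n/2)(\MTN(\psi^\tG)-1)$ for a centered state, the monotonicity of $g$ produces the claimed inequality. The main obstacle is the Gaussian minimum-energy step: it is the only place where the Gaussian hypothesis plays a nontrivial role, and it is what lets us bypass the implicit optimisation $N_A^*(N)$ of Theorem~\ref{thm:theorembis}' and obtain a closed-form bound valid for all values of~$N$.
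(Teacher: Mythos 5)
Your proof is correct and takes essentially the same route as the paper: you express $\EoF(\psi^\tG)$ through the symplectic eigenvalues of the reduced states (the paper does this via the two-mode-squeezed normal form of \cite{Serafini}), apply Jensen's inequality to the concave $g$, and control the sum of those eigenvalues by the total noise — your ``minimum-energy'' lemma $N\geq\sum_j(\nu_j-\tfrac12)$ applied to $\rho_A$ and $\rho_B$ is precisely the trace-versus-symplectic-trace inequality the paper cites from \cite{BhatiaJain}, which you instead reprove via Williamson and Bloch--Messiah. Apart from that self-contained derivation and your vacuum-$\tfrac12$ normalization of the covariance matrix (the paper uses vacuum $1$), the argument coincides with the paper's.
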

 	\begin{proof}
 		Let us consider a pure Gaussian state $|\psi^{\mathrm G}\rangle$ of an $n=n_A+n_B$ mode system
 		with covariance matrix
 	\begin{equation}\label{eq:matcov}
 		V_{ij}=\mean{\{R_i,R_j\}}-2\mean{R_i}\mean{R_j},
  	\end{equation}
 		where 
 		%\sout{\blue{$\mathbf{R}=~(X_1,P_1,\cdots,X_n,P_n)$ is the quadrature vector and}} 
 		%\textcolor{green}{The vector is already defined on page 2} 
 		$\mean{\cdot}:=\mean{\psi^G|\cdot|\psi^G}$ and $\{\cdot,\cdot\}$ denotes the anticommutator.
 		 We  assume without loss of generality that  the state is centered.  Applying local Gaussian unitaries $U_A^{\tG}, U_B^{\tG}$, such a state can always be transformed into  a state $|\psi_\nu\rangle$, in which Alice and Bob share $n_A$ two-mode squeezed vacuum states, while Bob's remaining $n_B-n_A$ modes are in the vacuum state \cite{Serafini}. Here $\nu=(\nu_1, \nu_2, \dots, \nu_{n_A})$ and the state $|\psi_\nu\rangle$ is characterized by its  covariance matrix $V_\nu$:
 		$$
 		V_\nu=\begin{pmatrix}
 		V_{A,\nu}&C\\C^T&V_{B,\nu}
 		\end{pmatrix},\ \  V_{A,\nu}=\begin{pmatrix}
 		\nu_1\mathds{1}_2&&0\\&\ddots&\\0&&\nu_{n_A}\mathds{1}_2
 		\end{pmatrix}_{2n_A\times2n_A}$$
 		\vspace*{-10pt}
 		$$ V_{B,\nu} = \begin{pmatrix}
 		\nu_1\mathds{1}_2&&0&&\\&\ddots&&0&\\0&&\nu_{n_A}\mathds{1}_2&&\\&&&\mathds{1}_2&&0\\&0&&&\ddots&\\&&&0&&\mathds{1}_2
 		\end{pmatrix}_{2n_B\times2n_B}$$$$ C=\begin{pmatrix}
 		\mu_1\sigma_z&&0&\\&\ddots&&0\\0&&\mu_{n_A}\sigma_z&
 		\end{pmatrix}_{2n_B\times2n_A}
 		$$
 		%$$
 		%V_\nu=\begin{pmatrix}
 		%V_{A,\nu}&C\\C^T&V_{B,\nu}
 		%\end{pmatrix},\  V_{A,\nu}=\begin{pmatrix}
 		%\nu_1\mathds{1}_2&&0\\&\ddots&\\0&&\nu_{n_A}\mathds{1}_2
 		%\end{pmatrix}_{2n_A\times2n_A} \  V_{B,\nu} = \begin{pmatrix}
 		%\nu_1\mathds{1}_2&&0&&\\&\ddots&&0&\\0&&\nu_{n_A}\mathds{1}_2&&\\&&&\mathds{1}_2&&0\\&0&&&\ddots&\\&&&0&&\mathds{1}_2
 		%\end{pmatrix}_{2n_B\times2n_B} C=\begin{pmatrix}
 		%\mu_1\sigma_z&&0&\\&\ddots&&0\\0&&\mu_{n_A}\sigma_z&
 		%\end{pmatrix}_{2n_B\times2n_A}
 		%$$
 		where $\nu_i=\cosh 2r_i$, $\mu_i=\sinh 2r_i$, $\mathds{1}_2=\begin{psmallmatrix}
 		1&0\\0&1
 		\end{psmallmatrix}$ and $\sigma_z=\begin{psmallmatrix}
 		1&0\\0&-1
 		\end{psmallmatrix}.$
 		Since the local unitaries do not change the entanglement of formation, we find
 		$
 		\EoF(\psi^\tG)=\EoF(\psi_\nu)=\sum_{i=1}^{n_A} g\left(\frac12(\nu_i-1)\right)
 		$ \cite{Serafini}.
 		Since $g$ is concave, one has 
 		$$
 		\EoF(\psi^\tG)\leq n_A g\left(\frac1{2n_A}\sum_{i=1}^{n_A}(\nu_i-1)\right).
 		$$
 		On the other hand,
 		\begin{eqnarray*}
 			\Ntot(\psi^\tG)=\frac12\Tr V_\psi
 			&=&\frac12\left(\Tr V_{A,\psi}+\Tr V_{B,\psi}\right)\nonumber\\
 			&\geq& \frac12\left(\Str V_{A,\psi}+\Str V_{B,\psi}\right).
 		\end{eqnarray*}
 	 Here, $\Str V_{A, \psi}$ is the symplectic trace of $V_{A, \psi}$, which is twice the sum of its symplectic eigenvalues; we then have $\tr V_{A,\psi}\geq\Str V_{A,\psi}$ \cite{BhatiaJain}.
 		Since the local symplectic transformations do not change the symplectic spectrum of the reduced states $\rho_A, \rho_B$, we also have
 		\begin{eqnarray*}
 			\Str V_{A,\psi}+\Str V_{B,\psi}&=&\Str V_{A,\nu}+\Str V_{B,\nu}\nonumber\\
 			&=&4\sum_{i=1}^{n_A} \nu_i+2(n_B-n_A).
 		\end{eqnarray*}
 		Hence, 
 		$$\ \ 
 		\Ntot(\psi^\tG)\geq2\sum_{i=1}^{n_A} \nu_i+(n_B-n_A)
 		=2\sum_{i=1}^{n_A} (\nu_i-1)+n_B+n_A.\ \ 
 		$$
 		Since $g$ is monotonically increasing, it follows that 
 		\begin{eqnarray*}
 			\EoF(\psi^\tG)&\leq &n_A  g\left(\frac1{4n_A}\left(\Ntot(\psi^\tG)-n_B-n_A\right)\right)\nonumber\\
 			&=&n_A  g\left(\frac{n}{4n_A}\left(\MTN(\psi^\tG)-1\right)\right).
 		\end{eqnarray*}
 	\end{proof}
 	This is the tightest  possible bound on $\EoF$ for Gaussian pure states that only depends on $\MTN$. 
 	% OLD It is saturated by $n_A$ two-mode squeezed vacuum states (involving the $n_A$ modes of $A$ and the $n_A$ first modes of $B$), with 
 	%the remaining $n_B-n_A$ modes of $B$ in the vacuum. 
  Indeed, one readily checks that it is saturated by $n_A$ two-mode squeezed vacuum states 
 	with identical squeezing parameters 
 	(involving all $n_A$ modes of $A$ and the $n_A$ first modes of $B$), with the remaining $n_B - n_A$ modes of $B$ in the vacuum \emph{i.e.}  if $|\psi^\tG\rangle=|\psi^\nu\rangle$, with $\nu_1=\nu_2=\dots=\nu_{n_A}$.

 	When $n_A=n_B$, the right-hand sides of~\eqref{eq:EoFvsMTNGauss} and~\eqref{eq:EoFMTNpure}  coincide, as expected since the latter inequality is saturated by the above Gaussian pure state. In contrast, as shown in Fig.~\ref{fig:BS}, when $n_A<n_B$ (or $\mu<1$), the right-hand side of~\eqref{eq:EoFvsMTNGauss} is slightly smaller than the one of~\eqref{eq:EoFNAstar}.  It is then  natural to wonder if
 	% by a term $0<\ln(2(1-\delta))$ \blue{\sout{which tends to $\ln2= 0.693$ for large $N$}}\red{Not if $\mu=1$...}. 
 	there are non-Gaussian pure states inside this gap.  This is indeed the case, as we show in Appendix \ref{s:theorembisGaussian}. This means that for a fixed $\MTN$, there exist non-Gaussian pure states with a higher entanglement of formation than any Gaussian pure state with the same value of $\MTN$, provided $n_A\not=n_B$.

 	Note finally that one cannot expect a lower bound on the $\EoF$ in terms of $\MTN$ since a product state has vanishing entanglement while it can have an arbitrarily large $\MTN$. The product of a strongly squeezed pure state with the vacuum is an example of such a case.
 	
 	\begin{figure}[t]
% 		\hspace{-135pt}
% 		\includegraphics[align=t, width=0.53\columnwidth]{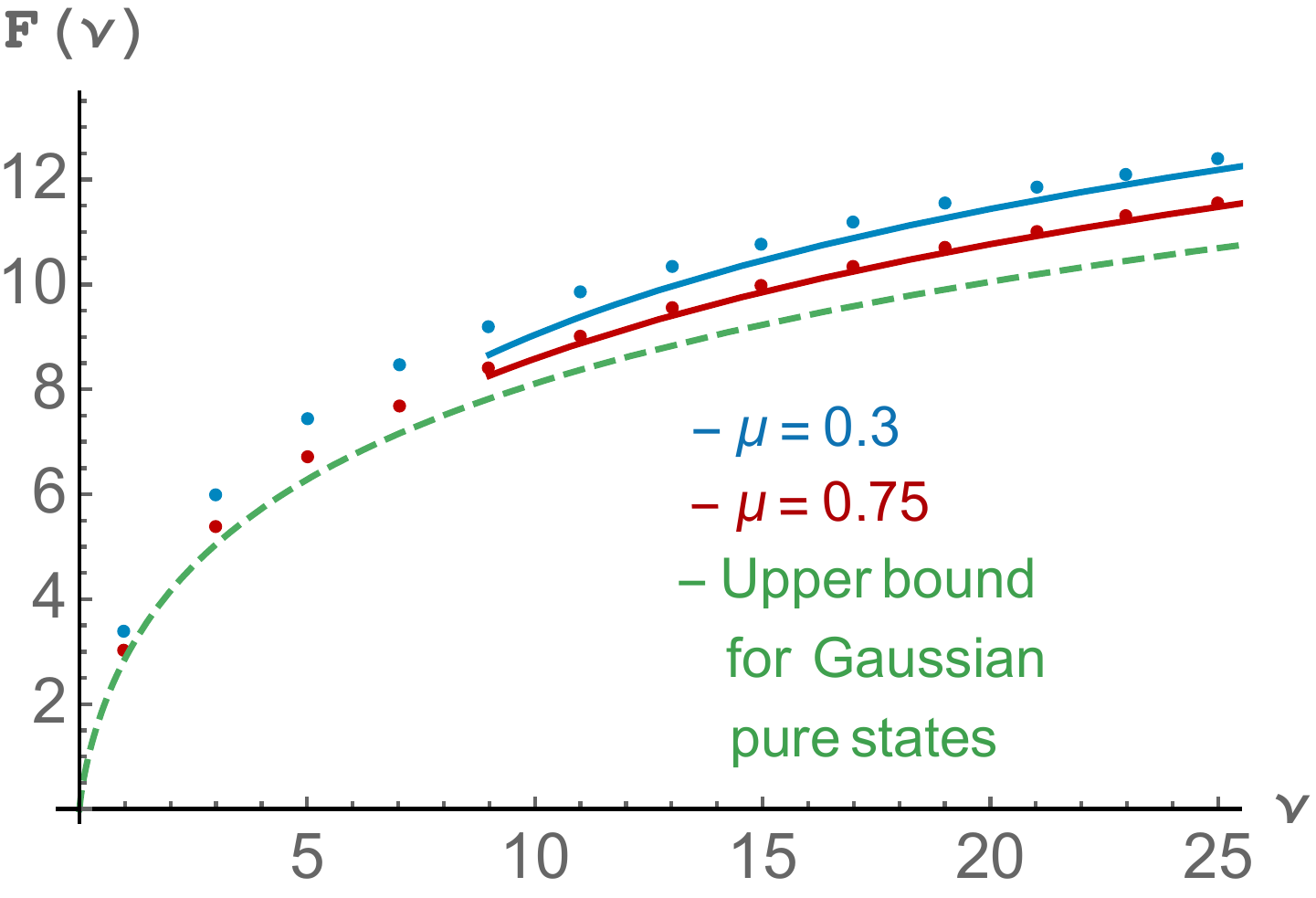}\hspace{125pt}\includegraphics[trim=0 0cm 0 0cm, clip, align=t, hsmash=c, width=0.4\columnwidth]{EoF_NC_bis_Ln_v2_ancien}	\hspace{-64pt}
% 		\includegraphics[trim=0.5cm 0cm 0 0cm, clip, align=t, width=0.17\columnwidth]{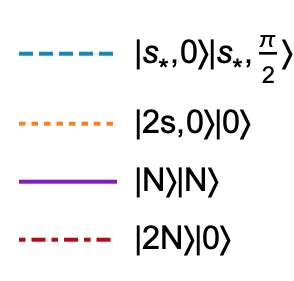}
% 		\vspace{-8pt}
 			\hspace{-135pt}
 			\includegraphics[align=t, width=0.53\columnwidth]{gNAversusN_v8_nA3}\hspace{125pt}\includegraphics[trim=0 0cm 0 0cm, clip, align=t, hsmash=c, width=0.43\columnwidth]{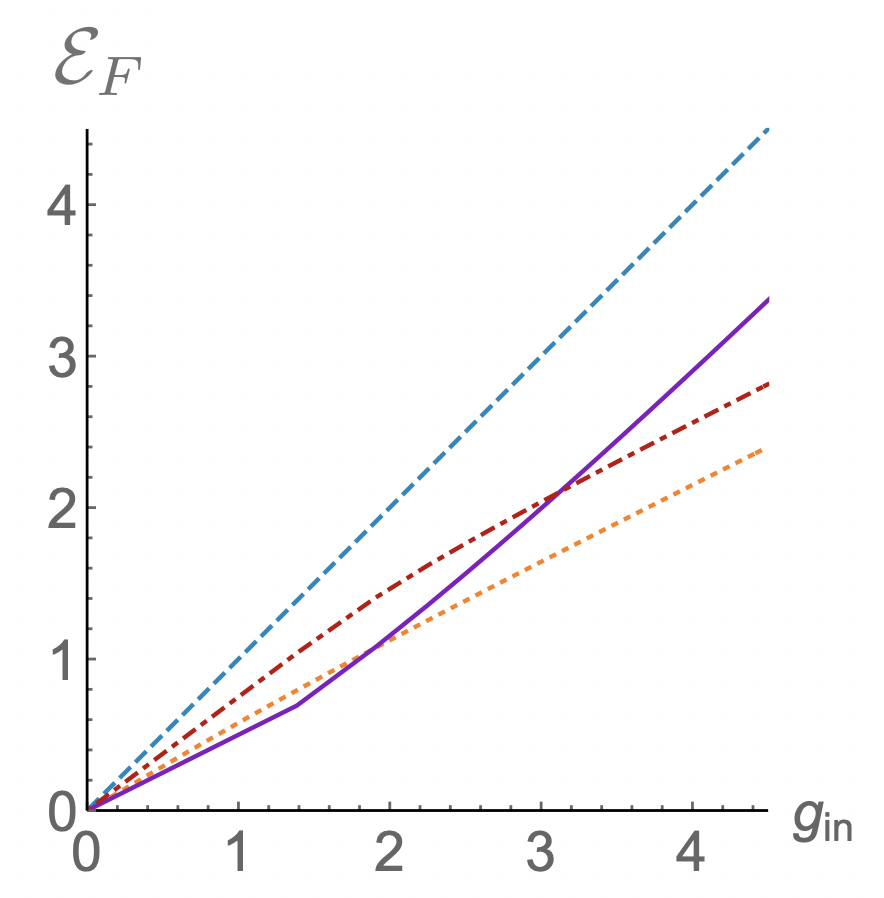}	\hspace{-59pt}
 			\includegraphics[trim=0.5cm 0cm 0 0cm, clip, align=t, width=0.17\columnwidth]{EoF_NC_Bis_Ln_Legend}
 			\vspace{-8pt}
 		
 		\caption{ Left: behaviour of  the right hand side of~\eqref{eq:EoFNAstar} as a function of $\nu=N/n_A$, for different values of $\mu=n_A/n_B$, as indicated. The dots are obtained from numerical solutions of~\eqref{eq:NAstarbis}  ($n_A=3$). Full lines are   computed using~\eqref{eq:NAstarapprox} for large $N$.  The green dashed line represents the Gaussian bound in \eqref{eq:EoFvsMTNGauss}.\\
 			Right: $\EoF(\psiout)$ at the output of a beam splitter as a function of $g_{in}=g(\frac12(\MTN(\psiin)-1))$ for various $|\psiin\rangle$, as indicated. 
 			$| 2s,0\rangle|0\rangle$ and $| s_*,0\rangle|s_*,\frac{\pi}{2}\rangle$ are squeezed states with the same %total noise 
 			$\MTN(\psiin)=\cosh(2s_*)$ ($s=\frac14\cosh^{-1}(2\cosh(2s_*)-1)$). The Fock states $\ket{N}\ket{N}$ and $\ket{2N}\ket{0}$ also have the same %total noise
 			$\MTN(\psiin)=2N+1$. 
 			\label{fig:BS}}
 	\end{figure}

 	%%%%%%%%%%%%%%%%%%%%%%%%%%%
 	\section{Entanglement generation with a beam splitter.}
 	It is well known that a balanced beam splitter $\hat B=~\exp(\frac{\pi}{4}(a_1^\dagger a_2-~a_1a_2^\dagger))$  applied to a separable in-state $|\psi_{\textrm{in}}\rangle$ produces an out-state $|\psi_{\textrm{out}}\rangle=\hat B|\psi_{\textrm{in}}\rangle$ that can be entangled provided the in-state is optically nonclassical~\cite{Kim2002, Wang2002, ascari05, kistpl16}.  In~\cite{ascari05} this property is used to quantify the amount of nonclassicality in a single-mode state $|\varphi\rangle$  by the amount of entanglement obtained in the out-state of a balanced beam splitter with as input state $|\psi_{\textrm{in}}\rangle=|\varphi\rangle\otimes|0\rangle$. Here we take a different approach. We treat entanglement and optical nonclassicality as independently and a priori defined properties of the states, and bound the amount of entanglement that can be obtained in the out-state by the amount of nonclassicality of a general separable in-state, as measured by its $\MTN$. By applying Theorem \ref{thm1}, we are indeed able to determine how efficiently a beam splitter can generate entanglement  in this manner. To see this, note that Eq.~\eqref{eq:EoFMTNpure} implies an upper bound on the entanglement of formation of $|\psi_{\textrm{out}}\rangle$ given the amount of optical nonclassicality available in $|\psi_{\textrm{in}}\rangle$, as follows. Let, for any value of the available nonclassicality ${\MTN}_{,0}>0$, 
 	$$
 	S_0=\{|\psiin\rangle=|\varphi_A,\varphi_B\rangle \,\,|\,\,\MTN(\psiin)\leq \MTN_{,0}\}.
 	$$
 	Since $\hat B$ preserves the total noise ($\Ntot(\psiout)=\Ntot(\psiin)$), Eq.~\eqref{eq:EoFMTNpure} implies  
 	$$
 	\EoF(\psiout)\leq g\left(\frac{1}2(\MTN(\psiin)-1)\right)\leq g(\frac12(\MTN_{,0}-1)),
 	$$
 	since $g$ is a monotonically increasing function. 
 	To see the bound is reached, let, for $s\geq0, \phi\in[0,2\pi[$,  $S(s,\phi)=e^{\frac{s}{2}(e^{-i\phi}a^2-e^{i\phi}a^{\dag2})}$ and define $\ket{\varphi_A}=|s_0,0\rangle:=S(s_0,0)|0\rangle$ and $\ket{\varphi_B}=|s_0,\pi/2\rangle:=S(s_0,\frac{\pi}{2})|0\rangle$, with $s_0$ chosen so that $\MTN(\psiin)=\cosh(2s_0)=\MTN_{,0}$. 
 	In this case, $|\psiout\rangle$ $=\hat B|\psiin\rangle=|\psi_{\textrm{TMS}}\rangle$, where $|\psi_{\textrm{TMS}}\rangle$ is the two-mode squee\-zed vacuum state with $\MTN(\psi_{\textrm{TMS}})=\MTN_{,0}$, which we saw saturates~\eqref{eq:EoFMTNpure}. There is a readily identified family of states that saturate the bound (see Appendix~\ref{s:maximumstate}), but typically states in $S_0$ do not. % saturate it.  
 	Several physically interesting examples are given in Fig.~\ref{fig:BS}; see Appendix \ref{s:AppBS} for details on the computations.
 	%details on the computations are provided in~Appendix~\ref{s:AppBS}. 
 	When $|\psiin\rangle=|N,0\rangle$,  $\MTN(\psiin)=N+1$ and the entanglement of formation of the out-state satisfies $\EoF(\psiout)/g(\frac12(\MTN(\psiin)-1))\simeq \frac12$ for large $N$. Hence, only one half of the possible maximal amount of entanglement is produced in this manner for a given amount of optical nonclassicality in the in-state.  
 	When $|\psiin\rangle=|N, N\rangle$, on the other hand, $\MTN(\psiin)=2N+1$, 
 	%the EoF of the out-state is given by the entropy of the distribution $P(m)=\frac1{2^{2N}}\frac{(2N-2m)!(2m)!}{(m!)^2((N-m)!)^2}$, which for large $N$ is approximately given by 
 	and, for large $N$, the entanglement of formation satisfies $\EoF(\psiout)/g(\frac12(\MTN(\psiin)-1))\simeq1$, hence almost the maximum possible amount of entanglement is produced. 
 	It is therefore less efficient to input a $2N$ photon state on one mode and the vacuum on the other, rather than $N$ photons on each.
 	%mode.  
 	%Indeed, in both cases $\MTN(\psiin)=2N+1$, but the output EoF is, for large $N$, twice as large in the second case.  
 	A similar phenomenon occurs with squeezed states at the input: $|\psiin\rangle=|2 s,0\rangle|0\rangle$ and $|\psiin\rangle=|s_*,0\rangle|s_*,\frac{\pi}{2}\rangle$  with $s=\frac14\cosh^{-1}(2\cosh(2s_*)-1)$ have the same values of $\MTN=\cosh(2s_*)$, but the output $\EoF$ is, for large $N$, twice as large in the second case. 
 	%but the corresponding out-state of the first has roughly double the EoF of the second. 
 	
 	Let us point out that, in terms of resource theory, the beam splitter does not ``convert'' nonclassicality into entanglement. Indeed, the total noise 
 	%of the out-state is identical to that of the in-state:
 	is conserved and none of the optical nonclassicality resource is lost in the process. Nevertheless,  the above bounds imply that the in-state must have a  large amount of optical nonclassicality for the entanglement production to be efficient in this manner.  Now note that it was shown in~\cite{HeDeB19} that environmental coupling leads to nonclassicality loss on a time scale inversely proportional to the $\MTN$ of a pure state such as $|\psi_{\textrm{in}}\rangle$: this implies that strongly nonclassical states  $|\psi_{\textrm{in}}\rangle$ with a large $\MTN$ are hard to maintain, so that the production of strongly entangled states with the above procedure will be difficult to realize. 
 	%Statements to the effect that entanglement is easily  synthesized with beam splitters once one disposes of a source of nonclassical light~\cite{yabithnaguki18} should therefore be treated with care.
 	
 	As mentioned above, Theorems \ref{thm1} and \ref{thm:theorembis}' involve convex roofs, which are hard to exploit for mixed states. This is true even for Gaussian states, for which the entanglement of formation remains difficult to evaluate, despite recent progress~\cite{TsRa17,TsOnRa19}. This problem can, however,  be overcome by using alternative, computable measures of entanglement and optical nonclassicality adapted to Gaussian states.

 	%%%%%%%%%%%%%%%%%%%%%%%
 	\section{Gaussian states: bounding $\LNeg$ by $\QCS$.} 
 	Consider a  Gaussian state $\rhoG$ with covariance matrix as defined in \eqref{eq:matcov}.
% 	$$
% 	V_{ij}=\mean{\{R_i,R_j\}}-2\mean{R_i}\mean{R_j},
% 	$$
% 	where 
% 	%\sout{\blue{$\mathbf{R}=~(X_1,P_1,\cdots,X_n,P_n)$ is the quadrature vector and}} 
% 	%\textcolor{green}{The vector is already defined on page 2} 
% 	$\mean{\cdot}:=\tr(\cdot\rhoG)$. 
We will evaluate its nonclassicality using two recently introduced and readily computable quantities: the total quantum Fisher information ($\mathcal{F}_\text{tot}$) (see~\eqref{eq:TQFI}) and the  quadrature coherence scale
 	($\QCS$) (see~\eqref{eq:QCS}). 
 	We recall that the set of optical classical states $\Ccl$~\cite{tigl65} of a system of $n$ modes contains all mixtures of coherent states $D(\alpha)|0\rangle$, where $D(\alpha)=\exp(\alpha a^\dagger-\alpha^*a)$ $[\alpha=(\alpha_1, \alpha_2,\dots,  \alpha_n)\in \C^n \text{ and }a=(a_1,a_2,\cdots,a_n)]$ is the displacement operator and $|0\rangle$ is the $n$-mode vacuum.
 	
  For any state $\rho$ and observable $A$, the quantum Fisher information of $\rho$ for $A$ is
 	$
 	\mathcal F(\rho, A)=4\partial_x^2 D_B^2(\rho, \exp(-ixA)\rho\exp(ixA))_{|x=0},
 	$
 	where $D_B^2(\rho,\sigma)=2(1-F(\rho, \sigma))$ is the Bures distance and $F(\rho,\sigma)=\Tr \sqrt{\sqrt{\rho}\sigma\sqrt{\rho}}$ the fidelity between $\rho$ and $\sigma$. It is known that $\mathcal F(\rho, A)$
 	%, a quantity important in metrology, 
 	is convex in $\rho$ and coincides with $4\, \Delta A^2$ on pure states~\cite{tope13, yu13}. 
The total quantum Fisher information of an $n$-mode state $\rho$ is defined as 
 	\begin{equation}\label{eq:TQFI}
 	\mathcal F_{\textrm{tot}}(\rho)=\frac1{4n}\sum_{j=1}^{2n} \mathcal F(\rho, \bR_j).
 	\end{equation}
 	 It follows that $\mathcal{F}_\text{tot}(\psi)$ coincides with $\MTN(\psi)$ on pure states, and, since it is convex, $\Ftot(\rho)\leq \MTN(\rho)$.
 It is known that the total quantum Fisher information is a nonclassicality  witness, meaning that $\mathcal F_{\textrm{tot}}(\rho)>1$ implies $\rho$ is nonclassical~\cite{Bievre19}.  It is however not a nonclassicality measure since there exist nonclassical states for which $\mathcal F_{\textrm{tot}}(\rho)\leq1$. Contrary to $\MTN$, however, it has the considerable advantage that is can be relatively easily computed on large classes of states. This is in particular true for Gaussian states, where one has~\cite{yabithnaguki18}
 		\begin{equation}\label{eq:FtotTraceV}
 		\Ftot(\rho_G)=\frac1{2n}\tr V^{-1}.
 		\end{equation}

 	In~\cite{Bievre19, HeDeB19} the (squared) quadrature coherence scale  is defined as 
 	\begin{equation}\label{eq:QCS}
 	\QCStwo(\rho)=\frac1{2n\mathcal P}\left(\sum_{j=1}^{2n} \Tr [\rho, R_j][R_j, \rho]\right), \,\,\, \mathcal P=\Tr\rho^2.
 	\end{equation} 
 	The quadrature coherence scale measures the spread of  the coherences of the quadratures of the state~\cite{HeDeB19}. Like the total quantum Fisher information, the quadrature coherence scale is  an optical nonclassicality witness: if $\QCStwo(\rho)>~1$ then $\rho$ is nonclassical~\cite{Bievre19}. But it is not a nonclassicality measure. It can however serve to construct such a measure, as shown in~\cite{Bievre19}.
 	
 	In general, $\mathcal{F}_\text{tot}$ and the $\QCS^2$ capture different properties of states and can in fact strongly differ on certain states~\cite{Bievre19}. Nevertheless, they are both nonclassicality witnesses, and coincide on pure states:
 	\begin{equation}
 	\QCStwo(\psi)=\Ftot(\psi) = \MTN(\psi) = \frac1n \Ntot(\psi) .
 	\end{equation} 
 In addition, as we now show, they coincide on all Gaussian states as well, including mixed ones: 
 	\begin{equation}\label{eq:QCS_Ftot}
 	\QCStwo(\rhoG)=\Ftot(\rhoG)=\frac1{2n}\tr V^{-1}.
 	\end{equation}
  In view of Eq. (\ref{eq:FtotTraceV}), it only remains to prove the first equality.
 	We first note that,  since both $\QCS(\rhoG)$ and $\Ftot(\rhoG)$ are invariant under phase space translations, we can assume that $\mean{R_i}=0$ for all $i$.  The  characteristic function of $\rho_G$ is 
 	$$
 	\chi_G(\xi)=\mathrm{Tr}\rho_G D(\xi)=\exp\{-\frac12\xi^T\Omega V\Omega^T\xi \},$$$$\text{where}\quad\Omega=\bigoplus_{k=1}^{n}\begin{pmatrix}
 	0&1\\-1&0
 	\end{pmatrix},
 	$$ 
 	and $\xi=(\xi_{11},\xi_{12},\cdots,\xi_{n1},\xi_{n2})$. It was shown in ~\cite{gu90, Bievre19} that the  right hand side of~\eqref{eq:QCS} can be written in terms of the characteristic function $\chi(\xi)$ of the state as follows:
 	\begin{equation}\label{eq:QCSWch}
 	\Ccal^2(\rho)=\frac{\||\xi|\chi\|_2^2}{n\|\chi\|_2^2}.
 	\end{equation}
 	Here $\|\cdot\|_2$ designates the $L^2$-norm, (for example $\|\chi\|_2^2:=\int|\chi|^2(\xi)\rd^2 \xi$).
 	From~\eqref{eq:QCSWch}, one finds with a direct computation
 	\begin{eqnarray*}
 		\QCStwo(\rho_G)&=&\frac1{n}\int\left(|\xi_1|^2+\cdots+|\xi_n|^2\right) f(\xi) \rd^2 \xi\\
 		&=&\frac1{n}\tr \Sigma=\frac1{2n}\tr V^{-1}
 	\end{eqnarray*}
 	where $f(\xi)$ is a Gaussian probability density with 0 mean value and covariance matrix $\Sigma=\frac12\Omega V^{-1}\Omega^T$.

 	We will now connect the optical nonclassicality of Gaussian states (measured with $\mathcal{F}_\text{tot}$, or equivalently $\QCS^2$) to their entanglement, measured with the logarithmic negativity $\LNeg$ \cite{VidalWerner}. 
 	%We suppose, without loss of generality, that $n_A\leq n_B$.  
 	Let $\mathcal T_B$ stand for partial transposition in Fock basis applied to the $n_B$ modes only, so that $\tilde \rho =\mathcal T_B[\rho]$ denotes the partial transpose of an arbitrary state $\rho$.
 	% \red{Pourquoi introduire cet opérateur et ne pas juste écrire $\tilde \rho =\rho^{T_B}$? Aussi, tu ne peux pas utiliser $T_B$ ici comme un operateur et le même plus loin comme s'appliquant à la matrice covariance}.
 	It is known that if $\tilde \rho$ is not positive semidefinite, then $\rho$ is entangled~\cite{Peres1996}. For a bi-partite system of $n_A+n_B=n$ modes, the logarithmic negativity of $\rho$ is then defined as
 	$\LNeg(\rho)=\ln(\Tr\sqrt{\tilde\rho^2})$.
 	Note that $\LNeg(\rho)>~0$ implies that $\rho$ is entangled. For pure states, but not in general, one has $\EoF(|\psi\rangle\langle \psi|)\leq\LNeg(|\psi\rangle\langle\psi|)$~\cite{VidalWerner}.
 	The partial transpose $\tilde \rho_G$ of a Gaussian state is again a Gaussian operator, with covariance matrix $\tilde V=T_BVT_B$, where %$T=\text{Diagonal}[\bbone_{2n_1},\bbone_{n_2},-\bbone_{n_2}]$
 	$T_B=\bbone_{2n_A}\bigoplus_{k=1}^{n_B}\begin{psmallmatrix}
 	1&0\\0&-1\end{psmallmatrix}$.  The logarithmic negativity of an arbitrary Gaussian state can be expressed in terms of the symplectic spectrum $\tilde \nu_{-,1}\leq \dots \leq \tilde \nu_{-,n_-} < 1 \leq \tilde \nu_{+,1}\leq \dots \leq\tilde\nu_{+,n_+}$ $(n_{-}+n_+=n$) of $\tilde V$, as follows~\cite{VidalWerner}: if $n_-=0$, then  $\LNeg(\rhoG)=0$, otherwise if $n_-\geq 1$, 
 	\begin{equation*}
 	\LNeg(\rhoG)=\sum_{i=1}^{n_-} \ln\frac{1}{\tilde\nu_{-,i}}.
 	\end{equation*}
 	This equation together with Eq.~\eqref{eq:QCS_Ftot} allow us to derive a main bound for arbitrary Gaussian states:
 	\begin{theorem}\label{thm2}
 		Let $\rho_G$ be a bipartite Gaussian state, then
 		{  \begin{equation}\label{eq:QCStwo_Ftot}
 			\LNeg(\rhoG)\leq n_-\left(\ln \QCStwo(\rho_G)+\ln\frac{n}{n_-}\right).
 			\end{equation} }
 	\end{theorem}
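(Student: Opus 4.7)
The plan is to establish the bound by a chain of three inequalities connecting $\LNeg(\rhoG)$ to the symplectic spectrum of $\tilde V$, and then to $\tr V^{-1}$, which by Eq.~\eqref{eq:QCS_Ftot} equals $2n\,\QCStwo(\rhoG)$.

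First I would observe that $T_B$ is an involution, $T_B^{2}=\bbone$, so inverting the relation $\tilde V = T_B V T_B$ gives $\tilde V^{-1}=T_B V^{-1}T_B$, and hence $\tr V^{-1}=\tr \tilde V^{-1}$. Since the symplectic eigenvalues of an invertible positive-definite matrix and its inverse are reciprocals of each other, the Bhatia--Jain inequality already invoked in the paper yields
\begin{equation*}
2n\,\QCStwo(\rhoG) \;=\; \tr\tilde V^{-1} \;\geq\; \Str\tilde V^{-1} \;=\; 2\sum_{i=1}^{n}\tilde\nu_i^{-1},
\end{equation*}
where the sum runs over all symplectic eigenvalues of $\tilde V$. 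All terms being positive, I can discard those with $\tilde\nu_i\geq 1$ and keep only the $n_-$ terms that enter $\LNeg$, yielding $\sum_{i=1}^{n_-}\tilde\nu_{-,i}^{-1}\leq n\,\QCStwo(\rhoG)$.

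The final step is to apply Jensen's inequality to the concave function $\ln$. From $\LNeg(\rhoG)=\sum_{i=1}^{n_-}\ln(1/\tilde\nu_{-,i})$, the concavity of $\ln$ gives
\begin{equation*}
\frac{1}{n_-}\sum_{i=1}^{n_-}\ln\frac{1}{\tilde\nu_{-,i}} \;\leq\; \ln\!\left(\frac{1}{n_-}\sum_{i=1}^{n_-}\frac{1}{\tilde\nu_{-,i}}\right) \;\leq\; \ln\!\left(\frac{n}{n_-}\QCStwo(\rhoG)\right),
\end{equation*}
and multiplying both sides by $n_-$ produces Eq.~\eqref{eq:QCStwo_Ftot}. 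I do not anticipate a serious obstacle: the argument relies only on one algebraic identity ($T_B^{2}=\bbone$), one matrix inequality already cited (Bhatia--Jain), and one convexity inequality (Jensen). The only step worth double-checking is the reciprocal relation between the symplectic eigenvalues of $\tilde V$ and $\tilde V^{-1}$, which follows from Williamson normal form together with the fact that the inverse of a symplectic matrix is symplectic; and the discarding of the $\tilde\nu_i\geq 1$ terms, which is what makes the prefactor $n_-$ (rather than $n$) appear inside the logarithm.
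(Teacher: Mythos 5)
Your proposal is correct and follows essentially the same route as the paper: identify $\QCStwo(\rhoG)=\frac1{2n}\tr\tilde V^{-1}$ via $\tilde V^{-1}=T_BV^{-1}T_B$, lower-bound the trace by the symplectic trace (Bhatia--Jain), drop the $\tilde\nu_{+,i}$ contributions, and finish with the concavity of the logarithm. The only cosmetic difference is that you spell out the Jensen step and the reciprocal relation between the symplectic spectra of $\tilde V$ and $\tilde V^{-1}$, which the paper leaves implicit.
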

 	
 	\begin{proof}
 		We note that $\tilde V^{-1}=T_BV^{-1}T_B$ so that $\Ccal^2(\rho_G)=\frac1{2n}\tr \tilde V^{-1}$.  Therefore,
 		\begin{eqnarray*}\label{eq:Ccal.vs.ent}
 			\Ccal^2(\rho_G)\geq\frac1{2n}\Str\tilde{V}^{-1}&=&\frac1{ n}\left(\sum_{i=1}^{n_-}\frac1{\tilde\nu_{-,i}}+\sum_{i=1}^{n_+}\frac1{\tilde\nu_{+,i}}\right)\\
 			&\geq&\frac1{ n}\sum_{i=1}^{n_-}\frac1{\tilde\nu_{-,i}}
 		\end{eqnarray*}
 		since $\tr A\geq\Str A$ \cite{BhatiaJain} and since $\tilde\nu_{+,i} \ge 0$, $\forall i$. Using the concavity of the logarithm, it implies~\eqref{eq:QCStwo_Ftot}.
 	\end{proof} 
 	In the special case $n_A=n_B=1$, a better bound can be obtained when $\LNeg(\rhoG)>0$ using the knowledge of $\det V=\det \tilde V=\tilde{\nu}_-^2\tilde \nu_+^2$:
 	\begin{equation*}
 	\Ccal^2(\rho_G)\geq
 	%\frac14\Str\tilde{V}^{-1}&=&
 	\frac12\left(\frac1{\tilde\nu_-}+\frac1{\tilde\nu_+}\right)
 	=\frac12\left(\frac1{\tilde \nu_-}+\frac{\tilde \nu_-}{\sqrt{\det V}}\right)
 	\end{equation*}
 	with  $\tilde\nu_- = \e^{-\LNeg(\rhoG)}$.
 	This inequality is saturated when the trace and symplectic trace of $\tilde V^{-1}$ coincide.

 	Theorem~\ref{thm2} shows that a large entanglement implies a large optical nonclassicality, but, in contrast with Theorem \ref{thm1}, both sides of the inequality are readily computable. 
 	It is instructive to rework~\eqref{eq:QCStwo_Ftot} and eliminate $n_-$ from it:
 	%\begin{corollary}\label{cor:Gaussian}
 	%\end{corollary}
 	\begin{corollary}\label{cor:Gaussian}
 		Let $\rhoG$ be a Gaussian state. Then
 		\begin{eqnarray}\label{eq:QCStwo_Ftot3}
 		\LNeg(\rhoG)>\frac{n}{\mathrm e}&\ \  \Rightarrow\ \ & \ln \QCStwo(\rho_G)\geq \frac1n\LNeg(\rhoG)-\frac{1}{\mathrm e},\qquad
 		\\
 		\label{eq:QCSnminus}
 		\QCStwo(\rho_G)<\e^{-\frac{n}{e}}&\ \Rightarrow\ & \LNeg(\rhoG)=0.
 		\end{eqnarray}
 	\end{corollary}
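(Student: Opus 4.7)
My plan is to derive both implications as consequences of Theorem~\ref{thm2} by using the elementary inequality $\ln y \leq y/\mathrm{e}$, which holds for all $y>0$ (with equality at $y=\mathrm{e}$), to eliminate the dependence on $n_-$ in the bound $\LNeg(\rhoG)\leq n_-(\ln\QCStwo(\rhoG)+\ln(n/n_-))$.

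The central intermediate step is the following observation. If $n_-\geq 1$, applying $\ln y\leq y/\mathrm{e}$ with $y=n/n_-$ yields $n_-\ln(n/n_-)\leq n/\mathrm{e}$, so Theorem~\ref{thm2} gives
\begin{equation*}
\LNeg(\rhoG)\leq n_-\ln\QCStwo(\rhoG)+\frac{n}{\mathrm{e}}.
\end{equation*}
This is the key intermediate bound from which both implications will follow by appropriate case analysis on the sign of $\ln\QCStwo(\rhoG)$.

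For \eqref{eq:QCStwo_Ftot3}, I would argue as follows. The hypothesis $\LNeg(\rhoG)>n/\mathrm{e}$ rules out the case $n_-=0$ (since that gives $\LNeg(\rhoG)=0$), so the intermediate bound applies. It also forces $n_-\ln\QCStwo(\rhoG)>0$, hence $\ln\QCStwo(\rhoG)>0$. Then $n_-\leq n$ together with the positivity of $\ln\QCStwo(\rhoG)$ gives $n_-\ln\QCStwo(\rhoG)\leq n\ln\QCStwo(\rhoG)$, yielding
\begin{equation*}
\LNeg(\rhoG)\leq n\ln\QCStwo(\rhoG)+\frac{n}{\mathrm{e}},
\end{equation*}
which is exactly the claimed inequality after dividing by $n$.

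For \eqref{eq:QCSnminus}, I would argue by contradiction. Suppose $\QCStwo(\rhoG)<\mathrm{e}^{-n/\mathrm{e}}$ but $\LNeg(\rhoG)>0$. Then $n_-\geq 1$, so the intermediate bound applies; since $\ln\QCStwo(\rhoG)<-n/\mathrm{e}<0$, we get $n_-\ln\QCStwo(\rhoG)\leq\ln\QCStwo(\rhoG)<-n/\mathrm{e}$, and thus $\LNeg(\rhoG)<0$, contradicting $\LNeg(\rhoG)\geq 0$. Hence $\LNeg(\rhoG)=0$.

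The only subtle point is the direction in which the inequality $\ln y\leq y/\mathrm{e}$ is used, and making sure the sign of $\ln\QCStwo(\rhoG)$ is handled correctly in each case; there is no significant obstacle, as the whole argument is essentially a clean extraction of $n_-$ from Theorem~\ref{thm2}.
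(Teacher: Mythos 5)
Your proof is correct and follows essentially the same route as the paper: both rest on the elementary bound $n_-\ln(n/n_-)\leq n/\mathrm{e}$ applied to Theorem~\ref{thm2}, then deduce \eqref{eq:QCStwo_Ftot3} by noting $\ln\QCStwo(\rhoG)>0$ and $n_-\leq n$, and deduce \eqref{eq:QCSnminus} by showing $n_-\geq1$ is incompatible with $\QCStwo(\rhoG)<\e^{-n/\e}$. Your contradiction phrasing of the second part is just a reformulation of the paper's direct lower bound $\QCStwo(\rhoG)\geq\e^{-n/(\e n_-)}$, so no substantive difference.
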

 	\begin{proof}
 		First note that 
 		$n_-\ln(n/n_-)\leq n/{\mathrm e}$, so that~\eqref{eq:QCStwo_Ftot}
 		implies
 		$
 		n_-\ln \QCStwo(\rho_G)\geq \LNeg(\rhoG)-\frac{n}{\mathrm e}.
 		$
 		Hence, if $ \LNeg(\rhoG)-\frac{n}{\mathrm e}>0$, then $n_-\geq 1$ and  $\QCStwo(\rho_G)>1$. Equation~\eqref{eq:QCStwo_Ftot3}) then follows.
 		%\begin{equation}\label{eq:QCStwo_Ftot3}
 		% \ln \QCStwo(\rho_G)\geq \frac1n\LNeg(\rhoG)-\frac{1}{\mathrm e}>0.
 		% \end{equation}
 		
 		To prove~\eqref{eq:QCSnminus}, note that~\eqref{eq:QCStwo_Ftot} implies that
 		\begin{eqnarray*}\label{eq:QCStwo_Ftot_low}
 			n_-\ln\frac1{\QCStwo(\rho_G)}&\leq& -\LNeg(\rhoG)-n\frac{n_-}{n}\ln\frac{n_-}{n}
 			\leq\frac{n}{e}.
 		\end{eqnarray*}
 		Hence $\QCStwo(\rhoG)\geq \e^{-\frac{n}{\e n_-}}$. Therefore, if $\QCStwo(\rho_G)\leq\e^{-\frac{n}{e}}$ then $n_-<1$ which implies~\eqref{eq:QCSnminus}.
 	\end{proof}
 	Estimate~\eqref{eq:QCStwo_Ftot3} provides a precise quantitative meaning to the statement that a strongly entangled Gaussian state has a large   $\QCStwo(\Grho)$ and is therefore far from optical classicality. One observes here, as in~\eqref{eq:WEoFexp}, an exponential growth of the optical nonclassicality with the entanglement of $\rhoG$.   Estimate~\eqref{eq:QCSnminus} shows that a Gaussian state with small $\QCStwo(\Grho)$ (well below the nonclassicality threshold~$1$) cannot be entangled.

 	%%%%%%%%%%%%%%%%%%%%%%%%%%%%%%%%%%%%

 	\section{Conclusions.}
 	We have established inequalities relating, for arbitrary states of a multi-mode optical field, several standard measures of entanglement and of optical nonclassicality.  In a nutshell,
 	the optical nonclassicality of a strongly entangled state is necessarily large and, in fact, grows exponentially with its entanglement. As an application,  we have bounded the amount of entanglement of formation that can be produced by sending a separable pure state through a beam splitter. Our bound implies that the nonclassicality of the in-state needs to be exponentially large as a function of the expected entanglement of formation of the out-state. Since nonclassicality is a resource that is hard to generate and preserve  due to environmental decoherence, as shown in~\cite{HeDeB19}, our results can be interpreted to say that, inasfar as the states of a multi-mode bosonic quantum field are concerned, the fragility of their entanglement is a consequence of their large nonclassicality.
In addition, we have shown that entanglement is more efficiently produced in a beam splitter when the nonclassicality is distributed equally among the two input modes. 

Measuring entanglement or nonclassicality is, in general, a difficult task, but, by restricting to Gaussian states (including mixed ones),  we also have established bounds between explicitly computable measures. In the process, we have derived an explicit and simple formula for the quadrature coherence scale of a Gaussian state which only depends on the covariance matrix.

Finally,  let us mention that there is interest in comparing the nonclassicality and entanglement of multimode (non-Gaussian) states, for example of the photon added and subtracted states \cite{FabrePRL,FabreNature}. The tools developed in this paper can serve this purpose.

% 	 	{\color{green} \bf We need some sentences on the Gaussian results here, I think.}
%
% 	 	{\color{green} \bf We should mention the papers of Fabre}
% 	 	
% 		{\color{green} \bf We should maybe finish with some open question or further work possibilities?}

 	\acknowledgments
 	%\medskip
 	%\noindent\emph{Acknowledgments.} \----
 	This work was supported in part by the Labex CEMPI (Agence Nationale de Recherche, Grant ANR-11-LABX-0007-01) and by the Nord-Pas-de-Calais Regional Council and the European Regional Development Fund through the Contrat de Projets \'Etat-R\'egion (CPER). The work was also supported by the Fonds de la Recherche Scientifique -- FNRS under Project  No. T.0224.18.  A. H. acknowledge the support of the Natural Sciences and Engineering Research Council of Canada (NSERC).
 	
 	%%%%%%%%%%%%%%%%%%%%%%%%%%%%

 	\appendix

 	%%%%%%%%%%%%%%%%%%%%%%%%%
 	\section{States saturating the bound \eqref{eq:EoFMTNpure}}\label{s:maximumstate}
 	
 	We identify here all pure states $|\psi\rangle$ of $n$ modes that saturate~\eqref{eq:EoFMTNpure} with $n_A=n_B=\frac{n}{2}$; $n$ is  even. 
 	%We set $N=\langle \psi|\hat N|\psi\rangle$. 
 	Let us write
 	\begin{equation}\label{eq:maximizer}
 	|\psi\rangle = \sum_{k,l} c_{k,l} |k,l\rangle,\quad \sum_{k,l}|c_{k,l}|^2=1 ,
 	\end{equation}
 	with $k=(k_1, \dots, k_{n_A}), l=(l_1,\dots, l_{n_A})\in \N^{n_A}$. 
 	The reduced states on the first (or last) $n_A$ modes are
 	$$
 	\rho_A=\sum_{k,k'} \left(CC^\dagger\right)_{k,k'}|k\rangle\langle k'|,\qquad \rho_B=\sum_{l,l'} \left(C^\dagger C\right)_{l,l'}|l\rangle\langle l'|  ,
 	$$
 	where $C$ is the operator 
 	defined as $C=\sum_{k,k'} c_{k,k'} \ket{k}\bra{k'}$ and we use the notation $(\cdot)_{k,k'} = \bra{k} \cdot \ket{k'}$.
 	%acting as follows:
 	%$$
 	%C|k\rangle =\sum_{k'} c_{kk'}|k'\rangle.
 	%$$
 	The right hand side of~\eqref{eq:EoFMTNpure} is the von Neumann entropy of the unique thermal state $\rho_\beta$ of $n_A$ modes, determined by
 	$
 	\rho_\beta=Z_\beta^{-1}\sum_k \e^{-\beta|k|_1}|k\rangle\langle k|,$
 	with $|k|_1=\sum_{i=1}^{n_A} k_i$, where $\beta$ is chosen such that
 	$ \Tr \hat N_A\rho_\beta=\frac{n_A}{e^{\beta}-1}=\frac{N}{2} .
 	$
 	The bound is therefore saturated iff $\rho_A=\rho_B=\rho_\beta$, and hence iff 
 	$C^\dagger C=C C^\dagger=D$, with $D$ being a diagonal operator with entries $d_k=~Z_\beta^{-1}\e^{-\beta|k|_1}$. 
 	Let $U=C D^{-1/2}$, then $U$ is unitary and, with $C=UD^{1/2}$, one finds
 	$
 	D=CC^\dagger=UDU^\dagger \ \Leftrightarrow \  DU=UD.
 	$
 	We conclude that $|\psi\rangle$ in~\eqref{eq:maximizer} saturates the bound iff $C=UD^{1/2}$, with $U$ being a unitary operator commuting with $D$. One obvious choice is to take $U=\bbone$, in which case $|\psi\rangle$ is 
 	%the purification of the $n/2$-mode thermal state with mean photon number $N/n$ per mode, which is a Gaussian state. It is, in fact, 
 	an $n/2$-fold tensor product of two-mode squeezed states 
 	\vspace*{-10pt}
 	$$|\psi_{\textrm{TMS}}\rangle=\cosh(r)^{-1}\sum_{k=0}^{+\infty} (\tanh r)^k \ket{k,k},\quad\beta=\ln\coth^2(r)$$
 	for which $\MTN(\psi_{\textrm{TMS}})=\cosh(2r)$ and $\EoF(\psi_{\textrm{TMS}})=g(\sinh^2(r))$. Thus,  $|\psi\rangle$ is a Gaussian state with $\MTN(\psi)=\cosh(2r)$ and $\EoF(\psi)=\frac{n}{2} \, g(\sinh^2(r))$, saturating the bound~\eqref{eq:EoFMTNpure}.
 	
 	Note that this is not the unique state saturating the bound since such a state is determined by $C=UD^{1/2}=D^{1/2}U$, with $U$ unitary. Therefore, all saturating states can be obtained from the above choice by applying local unitaries $U_A$ and $U_B$ that preserve the photon numbers $\hat N_A$ and $\hat N_B$, setting $C=U_AD^{1/2}U_B$. For example, when $n=2$, they are all states of the form
 	$$
 	|\psi\rangle=\cosh(r)^{-1}\sum_{k=0}^{+\infty} (\tanh r)^k\exp(i\phi_k) \ket{k,k}
 	$$  
 	with arbitrary phases $\phi_k$. If $\phi_k=k\phi$, these states are the general two-mode squeezed states obtained when we inject two orthogonal squeezed states in a balanced beam splitter, the angle of the squeezing of the first input state being $\phi$ (the second input state is squeezed along $\phi+\pi/2$). For general $n$, one has 
 	$$
 	|\psi\rangle=Z_\beta^{-1/2}\sum_{k} \exp(- \beta |k|_1 /2 ) \ket{\varphi_{A,k}, \varphi_{B,k}}
 	$$ 
 	where $|\varphi_{A,k}\rangle = U_A |k\rangle$,  $|\varphi_{B,k}\rangle=U_B|k\rangle$ and $Z_\beta^{-1/2}$ is a normalization constant; note that 
 	$
 	\hat N_A|\varphi_{A,k}\rangle=|k|_1 |\varphi_{A,k}\rangle,$ and $\hat N_B|\varphi_{B,k}\rangle=|k|_1 |\varphi_{B,k}\rangle.
 	$
 	In general, such states are not Gaussian. 
 	
 	%%%%%%%%%%%%%%%%%%%%%%%%%%%%%%%%%%%%%%%5
 	%%%%%%%%%%%%%%%%%%%%%%%%%%%%%%%%%%%%%%
 	%\section{Proof of Corollary 1}\label{s:proofcor}

 	\section{Proof of \eqref{eq:NAstarapprox}}\label{s:proofthmbis}
 	%We may consider $n_A\leq n_B$ with no loss of generality.  
 	%To prove Theorem 1' we first need to show that the function 
 	%$$

 	We prove here the asymptotic expression for $N_A^*(N)$, namely Eq.~\eqref{eq:NAstarapprox}.   We rewrite Eq.~\eqref{eq:NAstarbis} as
 	\begin{equation}\label{eq:NAstartris}
 	\mu g(\nu_*)=g(\mu(\nu-\nu_*)),
 	\end{equation}
 	where $\nu_*=N_A^*/n_A$, $\mu=n_A/n_B$, and $\nu=N/n_A$. 
 	Since we are mostly interested in states with a large optical nonclassicality, we consider the case where $\nu\gg 1$. Writing $\nu_*/\nu=1-\delta$ and using that for large $x$, $g(x)\simeq \ln(\e x)+\frac1{2x}$, Eq. \eqref{eq:NAstartris} becomes 
 	\vspace*{-10pt}
 	$$
 	\mu\ln(\e\nu (1-\delta))+\frac\mu{2\nu(1-\delta)}= \ln(\e\mu\nu\delta)+\frac1{2\mu\nu\delta}.
 	$$
 	Suppose now that $\delta\ll 1$, $\mu\nu\delta\gg1$, and $\nu(1-\delta)\gg1$. Then, we  find
 	\begin{eqnarray}\label{eq:NAstarapprox2}
 	N_A^*&=&(1-\delta)N,\ \\
 	\textrm{with}\ \ \ \delta&=&\frac1{\mu}\left[1-\frac{\e^{1-\mu}}{2\nu^\mu}\right]\frac1{(\e\nu)^{1-\mu}+1}.\nonumber
 	\end{eqnarray}
 	Keeping only the dominant term, one finds~\eqref{eq:NAstarapprox}.
 	In Fig.~\ref{fig:nustargraph} the numerically computed solution to~\eqref{eq:NAstartris} is compared to the asymptotic expressions~\eqref{eq:NAstarapprox2}. The agreement is seen to be excellent, even for relatively small values of $\nu$. The asymptotic expression~\eqref{eq:NAstarapprox} is also shown for comparison.
 	
 	%\begin{figure}
 	%	\floatbox[{\capbeside\thisfloatsetup{capbesideposition={right,bottom},capbesidewidth=6cm}}]{figure}[\FBwidth]
 	%	{\caption{Numerically computed solution $\nu_*$ of~\eqref{eq:NAstartris} (dots) and asymptotic expression~(8) (plain line) and \eqref{eq:NAstarapprox2} (dashed line) of $\nu_*=N_A^*/n_A$, for different values of $\mu$, as indicated.\\}	\label{fig:nustargraph}}
 	%	{\hskip-6cm	\includegraphics[align=t, width=0.76\columnwidth]{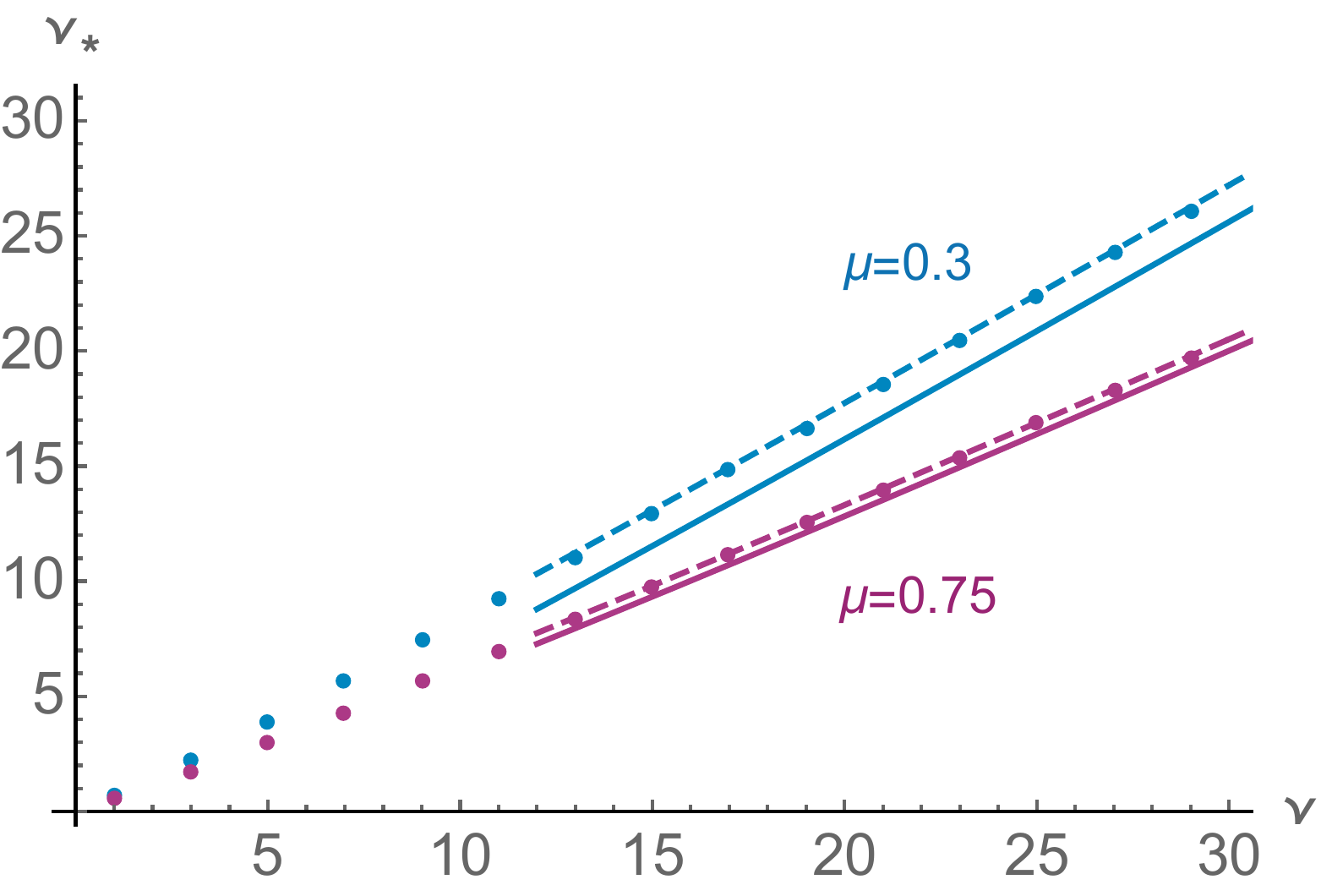}}
 	%\end{figure}
 	
 	\begin{figure}
 		\centerline{
 			\hskip-6cm \includegraphics[height=4cm, keepaspectratio]{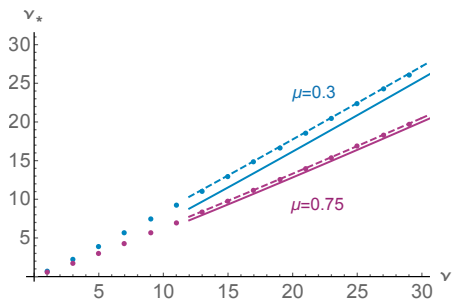}
 		}
 		\caption{Numerically computed solution $\nu_*$ of~\eqref{eq:NAstartris} (dots) and asymptotic expression~(8) (plain line) and \eqref{eq:NAstarapprox2} (dashed line) of $\nu_*=N_A^*/n_A$, for different values of $\mu$, as indicated. %{\color{blue} Anaelle, please do. I decided finally it is worth doing, since it corroborates the quality of the approximation given by the formula. See non-professional figure above. Note that delta1 correspond to only the dominant term in delta (the one written in (8), whereas delta2 takes the extra term above into account. I suggest only to plot delta2. Perhaps for several values of mu.}
 			\label{fig:nustargraph}}
 	\end{figure}

 	%%%%%%%%%%%%%%%%%%%%%%%%%%%%%%
 	\section{Non-Gaussian states violating bound~\eqref{eq:EoFvsMTNGauss}}\label{s:theorembisGaussian}

 	We now show that there exist non Gaussian states that violate the Gaussian bound~\eqref{eq:EoFvsMTNGauss}. For that purpose, we consider the case $n_A=1, n_B=2$. The Gaussian states that saturate the bound are then explicitly given by $(0<~q<1)$
 	\begin{equation*}
 	|\psi_q\rangle=(1-q)^{1/2}\sum_n q^{n/2} |n;n,0\rangle.
 	\end{equation*}
 	
 	Here we wrote $|n;m_1,m_2\rangle$ for the Fock state with $n$ photons in the single $A$ mode and $m_1$, respectively $m_2$ photons in the two modes of $B$. For this state, we have explicitly
 	$
 	\langle \psi_q|\hat N|\psi_q\rangle=\frac{2q}{1-q}$ and $ \MTN(\psi_q)=\frac{2}{3}\langle \psi_q|\hat N|\psi_q\rangle+1.
 	$
 	We will now exhibit a non Gaussian local transformation $U_B$ that, when applied to $|\psi_q\rangle$, yields a state $|\psi'\rangle=U_B|\psi_q\rangle$ that has the same entanglement of formation as $|\psi_q\rangle$ (since $U_B$ is local) but that lowers its $\MTN$. In other words, we will show that
 	\begin{equation}\label{eq:mtnbreak}
 	\MTN(\psi')<\MTN(\psi_q).
 	\end{equation}
 	This implies that
 	$$\ 
 	\EoF(\psi')=\EoF(\psi_q)=g(\langle \psi_q|\hat N |\psi_q\rangle/2)>g(\langle \psi'|\hat N |\psi'\rangle/2),
 	\ $$
 	and therefore shows $|\psi'\rangle$ does not satisfy the Gaussian bound~\eqref{eq:EoFvsMTNGauss}. Of course, it does satisfy the bound~\eqref{eq:EoFleqW}. The local transformation $U_B$ is constructed as follows. Let $k> 1$ be fixed. Then 
 	\begin{eqnarray*}
 		U_B|m_1, m_2\rangle&=&|m_1, m_2\rangle, \ \ \textrm{if}\ m_1m_2\not=0,\nonumber\\
 		U_B|0,0\rangle&=&|0,0\rangle,\\
 		U_B|m_1, 0\rangle&=&|m_1,0\rangle, \ \ \textrm{if}\ m_1\not=k,\\
 		U_B|k, 0\rangle&=&|0,1\rangle,\\
 		U_B|0,1\rangle&=&|k,0\rangle\\
 		%U_B|0, m_2\rangle&=&|0, m_2+1\rangle, \ \textrm{if}\ m_2\red{>1}.
 		U_B|0, m_2\rangle&=&|0, m_2\rangle, \ \ \textrm{if}\ m_2>1.
 	\end{eqnarray*}
 	With $|\psi'\rangle=U_B|\psi_q\rangle$ one then easily checks that, for $i=1,2,3$, 
 	$
 	\langle \psi'|a_i|\psi'\rangle=0=\langle \psi'|a_i^\dagger|\psi'\rangle,
 	$
 	so that $\langle \psi'|X_i|\psi'\rangle=0=\langle \psi'|P_i|\psi'\rangle$. It follows that
 	$
 	\MTN(\psi')=\frac{2}{3}\langle \psi'|\hat N|\psi'\rangle+1.
 	$
 	It will therefore suffice to prove $\langle \psi'|\hat N|\psi'\rangle<\langle \psi_q|\hat N|\psi_q\rangle$. One readily finds
 	$
 	\langle \psi'|\hat N|\psi'\rangle=\langle \psi|\hat N|\psi\rangle+(1-q)q^k(1-k).
 	$
 	so that~\eqref{eq:mtnbreak} follows since $k>1$. 
 	
 	%%%%%%%%%%%%%%%%%%%%%%%%%%%%%%%%
 	%%%%%%%%%%%%%%%%%%%%%%
 	%%%%%%%%%%%%%%%%%%%%%%%%%%%%
 	\section{Beam splitters}\label{s:AppBS}
 	For many states commonly considered the entanglement produced by the beam splitter is considerably lower than the maximal value possible.
 	For example, when $|\psiin\rangle=|N,0\rangle$, one obtains 
 	\begin{equation*}\label{transFock}
 	|\psi_{\textrm{out}}\rangle=\hat B|N,0\rangle=\sum_{m=0}^N\sqrt\frac{n!\,2^{-N}}{m!(N-m)!}|m, N-m\rangle.
 	\end{equation*}
 	Then  $\MTN(\psiin)=N+1$ and $\EoF(\psiout)$ is given by the entropy of the binomial distribution $P(k)=\frac{N!}{k!(N-k)!}2^{-N}$, which for large $N$ is approximately given by $\EoF(\psiout)\simeq\frac12\ln(2\pi\e N)$. Hence, in this case, $\EoF(\psiout)/g(\frac12(\MTN(\psiin)-1))\simeq \frac12$, as can be observed in Fig.~\ref{fig:BS}.
 	When $|\psiin\rangle=|{N}, {N}\rangle$, one  has $\MTN(\psiin)=2N+1$ and \cite{Kim2002}
 	$$
 	|\psiout\rangle=\hat B\ket{N,N}=\sum_{m=0}^Nc_m|2m, 2N-2m\rangle,
 	$$$$\text{with}\ \ \
 	c_m=\frac1{2^{N}}\frac{\sqrt{(2N-2m)!(2m)!}}{m!(N-m)!}.
 	$$
 	For large $N$, choosing $m=N/2+\delta$, one can apply the Stirling approximation $N!\rightarrow\sqrt{2\pi N}(N/e)^N$ to find that the coefficients $|c_m|^2$ converge to $\frac{f(m/N)}{N}$ with 
 	$f(x)=~\frac1{\pi}\frac1{\sqrt{\frac{m}{N}(1-\frac{m}{N})}}$. This result coincides with the one obtained in \cite{Nakazato2016}. The Von Neumann entropy is thus given by 
 	\begin{eqnarray*}
 		-\tr \rho_1\ln\rho_1&=&-\sum_{m=0}^N|c_m|^2\ln |c_m|^2\\
 		&=&-\frac1{N}\sum_{m=0}^NN|c_m|^2\ln |c_m|^2\\
 		&=&-\frac1{N}\sum_{m=0}^Nf(m/N)\ln f(m/N)+\ln N\nonumber\\
 		&\approx&-\int_0^1f(x)\ln f(x)\text{d}x+\ln N\\
 		&=&\ln\frac{\pi}{4}+\ln N.
 	\end{eqnarray*}
 	Hence, in this case $\EoF(\psiout)/g(\frac12(\MTN(\psiin)-1))\simeq 1$ which means that,  
 	asymptotically, the maximal possible amount of entanglement can be produced in this manner.
 	It is therefore more efficient to input a state with $N$ photon on each mode than $2N$ photons on one mode and the vacuum on the other, as in both cases $\MTN(\psiin)=2N+1$, but the output $\EoF$ is, for large $N$, twice as large in the first case.
 	
 	If $|\psiin\rangle=|2 s,0\rangle|0\rangle$, one finds the out-state is a two-mode squeezed vacuum state  of parameter $s$ on which we add some squeezing $s'$ on the first mode and $-s'$ on the second. Since those squeezing are local, they do not modify the value of the entanglement of formation, which is thus the one of the two-mode squeezed vacuum state . Hence, $\EoF(\psiout)=g(\sinh^2(s))$. Note, nevertheless, that while a two-mode squeezed vacuum state  of parameter $s$ has a total noise of $\cosh(2s)$, the total noise of the in-state is given by $\MTN(\psiin)=\frac{\cosh(4s)+1}{2}$. Only about one half of the possible maximal amount of entanglement is produced in this manner.
 	%A similar phenomenon occurs with squeezed states. If $|\psiin\rangle=|2 s,0\rangle|0\rangle$, one finds $\MTN(\psiin)=\frac{\cosh(4s)+1}{2}$, while $\EoF(\psiout)=g(\sinh^2(s))$.
 	On the other hand,  if $|\psiin\rangle=|s_*,0\rangle|s_*,\frac{\pi}{2}\rangle$, with  $s=\frac14\cosh^{-1}(2\cosh(2s_*)-1)\simeq \frac{s_*}{2}$ the total noise of the in-state is also given by $\MTN(\psiin)=\frac{\cosh(4s)+1}{2}$ but yields, after the beam splitter,  the maximum entanglement possible, namely $g(\sinh^2(s_*))\geq g(\sinh^2(s))$. So in this instance too it is more efficient, in terms of entanglement creation, to insert a symmetric input in the beam splitter .

 	%%%%%%%%%%%%%%%%%%%%%%%%%%%%%%%

% 	\bibliographystyle{unsrt}
% 	\bibliographystyle{apsrev4-1}
% 	\bibliography{NCvsEoF_references}
 	
 	%merlin.mbs apsrev4-1.bst 2010-07-25 4.21a (PWD, AO, DPC) hacked
 	%Control: key (0)
 	%Control: author (72) initials jnrlst
 	%Control: editor formatted (1) identically to author
 	%Control: production of article title (-1) disabled
 	%Control: page (0) single
 	%Control: year (1) truncated
 	%Control: production of eprint (0) enabled
 	%

 	%%%%%%%%%%%%%%%%%%%%%%%%%%%%%%%
 	%%%%%%%%%%%%%%%%%%%%%%%%%%%%%%%%%%

 \end{document}